\newtheorem{theorem}{Theorem}[section]
\newtheorem{proposition}[theorem]{Proposition}
\newtheorem{lemma}[theorem]{Lemma}
\newtheorem{deff}[theorem]{Definition}
\newenvironment{definition}{\begin{deff}\em}{\end{deff}}
\newtheorem{defs}[theorem]{Definitions}
\newenvironment{definitions}{\begin{defs}\em}{\end{defs}}
\newenvironment{proof}{\noindent \textbf{Proof:}}{\hfill 
$\Box$  \vspace{1ex}} 
\newtheorem{ex}[theorem]{Example}
\newtheorem{exs}[theorem]{Examples}
\newenvironment{examples}{\begin{exs}\em}{\end{exs}}
\newtheorem{corollary}[theorem]{Corollary}
\def\Z{\mathbb{Z}}
\def\F{\mathbb{F}}
\def\FF{\mathbf{K}}
\def\no{\mathbb{N}_0}
\def\nor{n-order}
\def\nw{n-weight}
\def\rb{{\mbox{\boldmath$\rho$}}}
\def\br{{\mbox{\boldmath$\rho$}}}
\def\ba{{\boldsymbol{a}}}
\def\bb{{\boldsymbol{b}}}
\def\bc{{\boldsymbol{c}}}
\def\be{{\boldsymbol{e}}}
\def\div{\textrm{div}}
\def\h{\mathcal{S}} 
\def\S{\mathcal{S}}
\def\r{\rho}
\def\setm{\{1, \ldots, m\}}
\def\l{\lambda}
\def\lub{\mathrm{lub}}
\def\ll{\preccurlyeq}
\def\ra{\rightarrow}
\def\M{{\cal M}}
\def\N{{\mathbb N}}
\def\R{\mathbf{R}}
\def\O{\mathcal O}
\def\S{{\cal S}}
\def\P{{\mathbb P}}
\def\B{\mathcal B}
\def\G{\Gamma}
\def\cL{\mathcal L}
\def\U{\mathcal U}
\def\X{\mathcal X}
\def\Y{\mathcal Y}
\def\cL{\mathcal L}
\def\cU{\mathcal U}
\begin{document}

\begin{center}
{\large\textsc{On algebras admitting a complete set of near weights,  evaluation codes and Goppa codes} }\\ \ \\
C\'{\i}cero Carvalho\footnote{Universidade Federal de Uberl\^andia, Faculdade de Matem\'atica, Av. J.N. de \'Avila 2160,  38408-100 Uberl\^andia -- MG, Brazil. email: cicero@ufu.br. Research partially supported by FAPEMIG - grant CEX APQ-4716-5.01/07} and Erc\'{\i}lio Silva\footnote{Universidade Federal do ABC, CMCC, 
Rua Santa Ad\'elia 166, 09210-170 Santo Andr\'e -- SP, Brazil. email: ercilio@ufabc.edu.br.} 
\end{center}
\vspace{6ex}

\noindent
\textbf{Abstract. }{\footnotesize In 1998 H\o holdt, van Lint and Pellikaan introduced the concept of a ``weight function'' defined on a $\F_q$-algebra  and used it to construct linear codes, obtaining among them the algebraic-geometric (AG) codes supported on one point. Later, in 1999, it was proved by Matsumoto that all codes produced using a weight function are actually AG codes supported on one point.
Recently, ``near weight functions'' (a generalization of weight functions), also defined on a $\F_q$-algebra,  were introduced to study codes supported on two points. In this paper we show that an algebra admits a set of $m$ near weight functions having a compatibility property, namely, the set is a ``complete set'', if and only if it is the ring of regular functions of an affine geometrically irreducible algebraic curve defined over $\F_q$ whose  points at infinity have a total of  $m$ rational branches. Then the codes produced using the 
near weight functions are exactly the AG codes supported on $m$ points.
A formula for the minimum distance of these codes is presented with examples which show that in some situations it compares better than the usual Goppa bound.} 
\vspace{2ex}

\noindent
\textbf{Index terms. }{\footnotesize near weight functions, evaluation codes, algebraic geometric codes}
\vspace{6ex}

\section{Introduction}

In 1981 V.D. Goppa showed how to use algebraic curves to produce error correcting codes (v.\ \cite{goppa}), and his construction opened a new area of research in coding theory. After a decade of studies, researchers started to wonder if it was possible to find a simpler way to produce these (so called) algebraic-geometric, or Goppa, codes, one of the earliest attempt being made by Blahut (\cite{blahut}). In 1998 H\o holdt et al.\ (v. \cite{h-vl-p}) presented a simple construction for error correcting codes, using an $\F$-algebra $\R$ and what they called a {\em weight function} on $\R$,  their construction clearly producing algebraic-geometric codes supported on one point. The theory presented in \cite{h-vl-p} was recently generalized (v.\ \cite{silva} and \cite{4a}) by replacing weight functions by other functions on $\R$, called {\em near weights}. In the present work we 
study specially algebras that admit $m$ near weight functions with the property of being ``a complete set'' (see Definition \ref{completeset}). We will characterize them as being the ring of regular functions of an affine geometrically irreducible algebraic curve whose  points at infinity have a total of  $m$ rational branches, from this we conclude that the codes obtained from such algebras using the complete set of near weight functions are exactly the algebraic-geometric codes supported on $m$ points
(thus generalizing results in \cite{matsumoto} and \cite{mun-tor}).

In what follows we will denote by $\no$ the set of nonnegative integers. Let   
$\F$ be a field and  $\R$  be a commutative ring that contains $\F$, i.e.\ an $\F$-algebra.
Given a function $\rho: \R \ra \no \cup \{- \infty\}$ let $\U_\rho := \{ f \in \R \; | \; \rho(f) \leq \rho(1)\}$ and $\M_\rho:= \{f \in \R \; | \; \rho(f)  >  \rho(1)\}$. 

\begin{definitions}
We call $\rho$ a {\em near order} function on $\R$ (or \nor\ for short) if for any $f, g \in \R$ we have: \\
(N0) $\rho(f) = - \infty$ $\Leftrightarrow$ $f = 0$; \\
(N1) $\rho(\lambda f) = \rho(f) $ $\forall \lambda \in \F^*$; \\
(N2) $\rho(f +  g ) \leq \max \{\rho(f), \rho(g)\}$; \\
(N3) if $\rho(f) < \rho(g) $ then $\rho(f h) \leq \rho(g h) $; if moreover $h \in \M_\rho$ then $ \rho(f h) < \rho(g h) $; \\
(N4) if $\rho(f) = \rho(g)$ and $f, g \in \M_\rho$ then there exists $ \lambda \in \F^*$ such that $\rho(f - \lambda g) < \rho(f)$.

An \nor\ function is called a {\em near weight} (or \nw\ for short) if it also satisfies the following condition. 

\noindent
(N5) $\rho(f g ) \leq \rho(f) + \rho(g)$ and equality holds when $f, g \in \M_\rho$.
\end{definitions}

A trivial way to define an \nor\ $\rho$ on $\R$ is to set $\r(0) := - \infty$ and $\r(f) = 1$ for all $f \in \R$, $f \neq 0$, then we have $\M_\r = \emptyset$ and $\U_\r = \R$. We want to avoid such functions, 
so we say that an \nor\ $\rho$ is {\em trivial} if $\M_\r = \emptyset$ and from now on we work only with nontrivial \nor\ functions.

From (N3) it follows that $\M_\r$ does not have zero divisors, 
we also get the following results (cf. \cite[Lemma 4]{4a}).

\begin{lemma} Let $\r$ be an \nor\ on $\R$, then: \\
i) the element $\lambda$ in (N4) is uniquely determined; \\
ii) if $\rho(f) \neq \rho(g)$ then $\r(f + g) = \max\{\r(f), \r(g)\}$.
\end{lemma}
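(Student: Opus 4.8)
The plan is to obtain both statements purely from the ultrametric-type inequality (N2) together with the scaling invariance (N1); axioms (N3)--(N5) will not be needed.

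For part (i), I would argue by uniqueness: suppose $\rho(f) = \rho(g)$ with $f, g \in \M_\rho$ and that $\lambda, \mu \in \F^*$ both satisfy $\rho(f - \lambda g) < \rho(f)$ and $\rho(f - \mu g) < \rho(f)$. Subtracting the two elements gives $(\mu - \lambda) g = (f - \lambda g) - (f - \mu g)$, so applying (N2) (and (N1) to absorb the sign) yields $\rho((\mu - \lambda) g) \leq \max\{\rho(f - \lambda g),\, \rho(f - \mu g)\} < \rho(f) = \rho(g)$. If $\mu \neq \lambda$, then $\mu - \lambda \in \F^*$ and (N1) forces $\rho((\mu - \lambda) g) = \rho(g)$, a contradiction; hence $\lambda = \mu$. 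Note the contradiction is genuine because $g \in \M_\rho$ gives $g \neq 0$ by (N0), so $\rho(g) \neq -\infty$.

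For part (ii), assume without loss of generality $\rho(f) < \rho(g)$, so in particular $g \neq 0$ and $\rho(g) \neq -\infty$. One inequality is immediate from (N2): $\rho(f + g) \leq \max\{\rho(f), \rho(g)\} = \rho(g)$. For the reverse, I would write $g = (f + g) + (-1)f$ and apply (N2) once more, using (N1) with $\lambda = -1$ to see $\rho((-1)f) = \rho(f)$; this gives $\rho(g) \leq \max\{\rho(f + g),\, \rho(f)\}$. If we had $\rho(f + g) < \rho(g)$, the right-hand side would be strictly below $\rho(g)$ (since $\rho(f) < \rho(g)$ as well), which is absurd. Therefore $\rho(f + g) = \rho(g) = \max\{\rho(f), \rho(g)\}$.

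I do not expect a real obstacle here: both arguments are the standard non-archimedean manipulations. The only point requiring a little care is the bookkeeping with the value $-\infty$ — one must ensure that any quantity appearing on the strict side of an inequality is not accidentally $-\infty$, which is guaranteed in part (i) by membership in $\M_\rho$ together with (N0), and in part (ii) by the standing hypothesis $\rho(f) < \rho(g)$.
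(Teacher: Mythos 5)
Your proof is correct; the paper does not reproduce an argument for this lemma (it simply cites \cite[Lemma 4]{4a}), and your ultrametric-style manipulations using only (N0)--(N2) are exactly the standard way to establish both parts, with the $-\infty$ bookkeeping handled properly.
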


\noindent
\textbf{Notation.} In the next sections we deal with subsets of $\no^m$, and will use the following conventions: we  denote by $\mathbf{0}$ the $m$-tuple having all entries equal to zero; when we write $\ba \in \no^m$ it's to be understood that the entries of the $m$-tuple $\ba$ are $\ba := (a_1, \ldots, a_m)$ (similarly for $\bb, \bc \in \no^m$); we write sometimes $\ba_i \in \no^m$, being then understood that $\ba_i = (a_{i 1}, \ldots,a_{i m})$. Also, for $i \in \{1, \ldots, m\}$  we denote by  $\be_i$ the $m$-tuple that has all entries equal to zero, except the $i$-th entry, which is equal to 1. We add $m$-tuples and multiply them by nonnegative integers in the usual way.

\section{Codes from near weights}

In this section we show how to construct codes from algebras that admit a complete set of \nw s, and give a lower bound for their minimum distance. We begin by introducing the concept of normalized \nor s.

\begin{definition} Let $\r$ be an \nor\ function, we define the {\em normalization} $\rho'$ of $\r$ as being the function $\r' : \R \rightarrow \no \cup \{- \infty\}$ defined by $\r'(0) = - \infty$, $\r'(f) := 0$ if $f \in \U_\r \setminus \{0\}$ and $\r'(f) := \r(f)$ if $f \in \M_\r$. 
\end{definition}

From the proof of \cite[Proposition 1]{4a} we know that $\r'$ is an \nor, $\U_{\r'} =  \U_\r$ and $\M_{\r'} =  \M_\r$. From now on we work only with normalized \nor s. If $\r$ is an \nw\ then from (N5) we see that $\U_\r$ is a subalgebra of $\R$. 

In this section we will  show how to construct linear codes from $\F$-algebras and a set of \nw s which have a compatibility property which we define now. Let $\{\r_1, \ldots, \r_m\}$ be a set of (nontrivial,  normalized) \nw s. 

\begin{definition} \label{completeset} We say that $\{\r_1, \ldots, \r_m\}$ is a {\em complete set of \nw s for $\R$} if $\cap_{i = 1}^m \U_{\r_i} = \F$ and for all $k \in \{1, \ldots, m\}$ we have that  $\no \setminus \rho_k(\cap_{1\leq i \leq m\, ; \,  i \neq k} \cU_{\r_i})$ is a finite set.
\end{definition}

Let $\R$ be an $\F$-algebra that admits $\{\r_1,\ldots, \r_m\}$ as a complete set of \nw s.
Given $\ba = (a_1, \ldots,  a_m)  \in \no^m$  we define 

\[\cL(\ba) :=\{ f \in \R : \rho_i(f) \leq a_i \;  \forall \, i = 1 ,\ldots, m\}.
\] 

From (N0),(N1) and (N2) we get that $ \cL$ is an $\F$-vector subspace of $\R$.

\begin{lemma} \label{lemma_dim} For any  $k \in \{1, \ldots, n\}$ we get $\cL(\ba) \subset \cL(\ba + \be_k)$; moreover $\dim (\cL(\ba + \be_k)/ \cL(\ba)) \leq 1$.
\end{lemma}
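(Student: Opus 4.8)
The plan is to treat the two assertions of Lemma \ref{lemma_dim} in turn, the first being a triviality and the second a short argument built on axiom (N4).

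For the inclusion $\cL(\ba) \subset \cL(\ba + \be_k)$ I would simply note that if $f \in \cL(\ba)$, then $\rho_i(f) \le a_i \le a_i + (\be_k)_i$ for every $i$, so $f$ satisfies the defining inequalities of $\cL(\ba + \be_k)$.

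For the dimension bound, I would show that any two elements of $\cL(\ba+\be_k)$ have proportional images in the quotient $\cL(\ba+\be_k)/\cL(\ba)$. So fix $f,g \in \cL(\ba+\be_k)$; if either lies in $\cL(\ba)$ its image is $0$ and there is nothing to prove, so assume $f,g \notin \cL(\ba)$. Since the $i$-th entry of $\ba+\be_k$ equals $a_i$ for all $i \ne k$, the membership $f \in \cL(\ba+\be_k)$ already forces $\rho_i(f) \le a_i$ for all $i \ne k$; as $f \notin \cL(\ba)$, the only inequality that can fail is the $k$-th one, so $\rho_k(f) > a_k$, and together with $\rho_k(f) \le a_k+1$ this gives $\rho_k(f) = a_k+1$. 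The same reasoning gives $\rho_k(g) = a_k + 1$. Next I would observe that $f,g \in \M_{\rho_k}$: since we work with normalized \nor s we have $\rho_k(1) = 0$, and $\rho_k(f) = a_k+1 \ge 1 > 0 = \rho_k(1)$, so $f \in \M_{\rho_k}$, and likewise $g$. Now (N4) applies to $f,g$ relative to $\rho_k$: there is a $\lambda \in \F^*$ with $\rho_k(f - \lambda g) < \rho_k(f) = a_k+1$, i.e.\ $\rho_k(f-\lambda g) \le a_k$. To finish I would check $f - \lambda g \in \cL(\ba)$: the $k$-th inequality is exactly what we just obtained, while for $i \ne k$ axiom (N1) gives $\rho_i(\lambda g) = \rho_i(g) \le a_i$ and then (N2) gives $\rho_i(f - \lambda g) \le \max\{\rho_i(f), \rho_i(\lambda g)\} \le a_i$. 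Hence $f \equiv \lambda g \pmod{\cL(\ba)}$, and since $f,g$ were arbitrary, $\dim_\F(\cL(\ba+\be_k)/\cL(\ba)) \le 1$.

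I do not expect a genuine obstacle here; the only point requiring a little care is verifying the hypotheses of (N4) — namely that both $f$ and $g$ land in $\M_{\rho_k}$, which is where the normalization convention $\rho_k(1)=0$ enters — together with keeping track of the fact that passing from $\ba$ to $\ba+\be_k$ changes only the $k$-th coordinate.
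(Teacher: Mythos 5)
Your proof is correct and follows essentially the same route as the paper: reduce to two elements of $\cL(\ba+\be_k)\setminus\cL(\ba)$, pin down $\rho_k(f)=\rho_k(g)=a_k+1$, apply (N4) to get $\lambda$ with $\rho_k(f-\lambda g)\le a_k$, and handle the remaining coordinates with (N1) and (N2). If anything, you are more careful than the paper in explicitly verifying the hypotheses of (N4) — that both elements lie in $\M_{\rho_k}$ via the normalization $\rho_k(1)=0$ — a step the paper's proof leaves implicit.
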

\begin{proof} Assume that $f, g  \in  \cL(\ba + \be_k) \setminus \cL(\ba)$, from (N4) we know that there exists $\lambda \in \F^*$ such that $\rho_k(f - \lambda g) \leq a_k$,  and from (N1) and (N2) we get $\rho_i(f - \lambda g) \leq a_i$ for all $i \in \{1, \ldots,  m\}\setminus\{k\}$. Thus $f = \lambda g + h$ with $h \in \cL(\ba)$ hence $ \overline{f} = \lambda \overline{g}$ as elements of $\cL(\ba + \be_k)/ \cL(\ba)$.
\end{proof} 

Since $\cL( \mathbf{0}) = \F$ we get as a corollary of the above lemma that $\cL(\ba)$ is an $\F$-vector space of finite dimension for any $\ba \in \no^m$.

For the remainder of this section, we will assume that $\F$ is a finite field.
Let $\varphi : \R \rightarrow \F^n$ be a surjective morphism of $\F$-algebras and let $\ba \in \no^m$.  We will denote by $C(\ba)$ the code $\varphi(\cL(\ba))$ and we want to determine a lower bound for the minimum distance of $C(\ba)^{\perp}$, in a way similar to that which has been done by H\o holdt et  alli in the case where $m = 1$ (cf. \cite[Section 4]{h-vl-p}). 

\begin{definition} \label{defna}
Let $k \in\{1, \ldots, m\}$, and define $N_k(\ba)$ as a set of pairs of functions $\{ (f_{k, 1}, g_{k, 1}), \ldots, (f_{k, \ell_k}, g_{k, \ell_k}) \}$ such that: \\
a) $f_{k, i}, g_{k, i} \in \cL(\ba + \be_k)$ for all $i = 1, \ldots, \ell_k$; \\
b) $\r_k(f_{k, i}) + \r_k(g_{k, i}) = a_k + 1$; \\
c) $\r_k(f_{k, 1}) < \cdots < \r_k(f_{k, \ell_k})$ (hence $\r_k(g_{k, 1}) > \cdots > \r_k(g_{k, \ell_k}) )$; \\
d) given $s \in \{ 1, \ldots, \ell_{k} - 1\}$ we have $f_{k, s} g_{k, r} \in \cL(\ba)$ for all $r = s + 1, \ldots, \ell_k$. \\
We will write $\nu_k(\ba) := \# N_k(\ba)$.
\end{definition}

Now, consider the matrices $M$ and $N$,  where the first $\ell_k$ rows of $M$ are $\varphi(f_{k, 1}),$ $ \ldots,$ $  \varphi(f_{k, \ell_k})$, the first $\ell_k$ columns of $N$ are  $\varphi(g_{k, 1}), \ldots, \varphi(g_{k, \ell_k})$, and we complete the rows of $M$ and the columns of $N$ in a way such that $rank(M) = rank(N) = n$. Let $\mathbf{y} = (y_1, \ldots, y_n) \in \F^n$ and let $D(\mathbf{y}) := (a_{i\, j})_{n \times n}$ where $a_{i\, j} = 0$ if $i \neq j$ and $a_{i\,  i} = y_i$ for $i = 1, \ldots, n$. Since $rank(M) = rank(N) = n$ we get $rank(M D(\mathbf{y}) N) = wt(\mathbf{y})$; moreover if $r, s \in \{1, \ldots,  \ell_k\}$ then $(M D(\mathbf{y}) N)_{r,  s} = \mathbf{y} \cdot (\varphi(f_{k,  r}) * \varphi(g_{k,  s}))$, where $\cdot$ is the usual inner product in  $\F^n$ and $*$ is the usual componentwise product that makes $\F^n$ an $\F$-algebra.

\begin{proposition} If\/  $\mathbf{y} \in C(\ba)^\perp \setminus C(\ba + \be_k)^\perp $ then $rank(M D(\mathbf{y}) N) \geq \# N_k(\ba)$.
\end{proposition}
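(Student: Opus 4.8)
The plan is to exploit the matrix identity $\mathrm{rank}(M D(\mathbf{y}) N) = \mathrm{wt}(\mathbf{y})$ together with the combinatorial structure of $N_k(\ba)$ to locate a large invertible submatrix of $M D(\mathbf{y}) N$. Since $\mathbf{y} \in C(\ba)^\perp$, property (d) of Definition \ref{defna} forces $\mathbf{y}\cdot(\varphi(f_{k,s})*\varphi(g_{k,r})) = 0$ whenever $s < r$, so the top-left $\ell_k \times \ell_k$ block of $M D(\mathbf{y}) N$ is lower triangular. Its diagonal entries are $(M D(\mathbf{y}) N)_{r,r} = \mathbf{y}\cdot(\varphi(f_{k,r})*\varphi(g_{k,r}))$. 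The key observation is that $f_{k,r} g_{k,r} \in \cL(\ba + \be_k)$ by property (a) together with (N5) and (b): indeed $\r_k(f_{k,r} g_{k,r}) \leq \r_k(f_{k,r}) + \r_k(g_{k,r}) = a_k + 1$, and for the other indices $i \neq k$ we use that $\r_i(f_{k,r}), \r_i(g_{k,r}) \leq a_i$ forces $\r_i(f_{k,r} g_{k,r}) \leq 2a_i$ — which is not immediately $\leq a_i$, so I must be more careful here; the correct route is to note $f_{k,r} g_{k,r} - f_{k,s}g_{k,r}$ type manipulations, or rather that the product lands in $\cL(\ba+\be_k)$ only after using that one factor lies in a $\cU_{\r_i}$ for $i\neq k$. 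I would instead argue directly that $f_{k,r} g_{k,r} \notin \cL(\ba)$ while every ``off-diagonal with $s>r$'' product does lie in $\cL(\ba)$ — wait, (d) only gives the $s<r$ direction.

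So here is the cleaner approach. Order the rows and columns of the $\ell_k \times \ell_k$ block by increasing $\r_k(f_{k,\bullet})$, hence decreasing $\r_k(g_{k,\bullet})$. Property (d) says that for $s < r$ (so $\r_k(f_{k,s}) < \r_k(f_{k,r})$ and $\r_k(g_{k,r}) < \r_k(g_{k,s})$) we have $f_{k,s} g_{k,r} \in \cL(\ba)$, whence $(M D(\mathbf{y}) N)_{s,r} = 0$. Thus the block is lower triangular. The diagonal entries are $\mathbf{y}\cdot(\varphi(f_{k,r})*\varphi(g_{k,r})) = \mathbf{y}\cdot\varphi(f_{k,r} g_{k,r})$, and since $\r_k(f_{k,r} g_{k,r}) = a_k+1$ by (b) and (N5) (the equality part, as $f_{k,r}, g_{k,r}\in\M_{\r_k}$ once $\r_k(f_{k,r}), \r_k(g_{k,r}) > 0$ — a point to verify, handling possible zero values separately), while for $i\neq k$ the product $f_{k,r}g_{k,r}$ does lie in $\cL(\ba)$ in the coordinate $i$ because... here I would need one of $\r_i(f_{k,r})$ or $\r_i(g_{k,r})$ to be $0$; this is exactly the kind of fact that the completeness hypothesis should supply via a suitable choice of representatives. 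Granting that $f_{k,r} g_{k,r} \in \cL(\ba + \be_k)\setminus \cL(\ba)$, we conclude $\varphi(f_{k,r} g_{k,r}) \notin C(\ba)$ is not forced, but rather: $\mathbf{y} \in C(\ba)^\perp$ kills $C(\ba)$ but $\mathbf{y}\notin C(\ba+\be_k)^\perp$ means $\mathbf{y}$ does not kill all of $\cL(\ba+\be_k)$, so there is some element of $\cL(\ba+\be_k)$ on which $\mathbf{y}$ evaluates nontrivially.

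The decisive step is then this: every $f_{k,r} g_{k,r}$ has $\r_k$-value exactly $a_k+1$ and lies in $\cL(\ba+\be_k)$, and modulo $\cL(\ba)$ they all represent the same nonzero class (by Lemma \ref{lemma_dim}, since $\dim(\cL(\ba+\be_k)/\cL(\ba)) \leq 1$ and each $f_{k,r}g_{k,r}\notin\cL(\ba)$). Hence there exist $\mu_r \in \F^*$ with $f_{k,r}g_{k,r} \equiv \mu_r\, w \pmod{\cL(\ba)}$ for a fixed $w\in\cL(\ba+\be_k)\setminus\cL(\ba)$. Because $\mathbf{y}\notin C(\ba+\be_k)^\perp$ and the only possible extension of $C(\ba)$ is spanned by $\varphi(w)$, we get $\mathbf{y}\cdot\varphi(w) \neq 0$, so $(M D(\mathbf{y}) N)_{r,r} = \mu_r\,(\mathbf{y}\cdot\varphi(w)) \neq 0$ for every $r$. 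A lower triangular matrix with all diagonal entries nonzero is invertible, so the top-left $\ell_k\times\ell_k$ submatrix of $M D(\mathbf{y}) N$ has rank $\ell_k$, giving $\mathrm{rank}(M D(\mathbf{y}) N) \geq \ell_k = \#N_k(\ba)$.

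The main obstacle I anticipate is the bookkeeping needed to show that each product $f_{k,r} g_{k,r}$ genuinely lands in $\cL(\ba+\be_k)$ — specifically, controlling its $\r_i$-value for $i\neq k$. Condition (a) only guarantees $f_{k,r}, g_{k,r}\in\cL(\ba+\be_k)$, so naively $\r_i(f_{k,r}g_{k,r})$ could be as large as $2a_i + 1$. Resolving this requires either an implicit normalization built into the definition of $N_k(\ba)$ (so that at most one factor has positive $\r_i$-value for each $i\neq k$) or invoking the complete-set hypothesis to choose the $f_{k,r}, g_{k,r}$ inside the appropriate intersections $\cap_{i\neq k}\cU_{\r_i}$; I would make this precise first, as everything else is then a short triangular-matrix argument. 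The secondary care point is the degenerate case where some $\r_k(f_{k,r})$ or $\r_k(g_{k,r})$ equals $0$, where the equality clause of (N5) does not apply and one checks the diagonal entry directly from (b) and Lemma \ref{lemma_dim}.
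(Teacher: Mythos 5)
Your proof is correct and follows essentially the same route as the paper's: condition (d) of Definition \ref{defna} together with $\mathbf{y}\in C(\ba)^\perp$ makes the top-left $\ell_k\times\ell_k$ block of $MD(\mathbf{y})N$ lower triangular, and the diagonal entries $\mathbf{y}\cdot\varphi(f_{k,r}g_{k,r})$ are nonzero because each $f_{k,r}g_{k,r}$ spans the one-dimensional quotient $\cL(\ba+\be_k)/\cL(\ba)$ (Lemma \ref{lemma_dim}) on which $\mathbf{y}$ acts nontrivially by the hypothesis $\mathbf{y}\notin C(\ba+\be_k)^\perp$. The point you flag --- that Definition \ref{defna} does not literally force $f_{k,r}g_{k,r}\in\cL(\ba+\be_k)\setminus\cL(\ba)$ --- is a genuine imprecision, but it is equally present in the paper, which simply asserts this membership; it disappears in the intended construction of $N_k(\ba)$, where the pairs are drawn from $\cap_{i\neq k}\cU_{\r_i}$ with $\r_k$-values in $\S_k\setminus\{0\}$, so that (N5) gives $\r_k(f_{k,r}g_{k,r})=a_k+1$ and $\r_i(f_{k,r}g_{k,r})=0$ for $i\neq k$.
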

\begin{proof} We have already noted that $(M D(\mathbf{y}) N)_{r,  s} = \mathbf{y} \cdot \varphi(f_{k,  r} \, g_{k,  s})$ for all $r,  s \in \{1, \ldots,  \ell_k\}$. From definition \ref{defna} (d) we get that the $\ell_k \times \ell_k$ minor at the upper left corner of $M D(\mathbf{y}) N$ is a lower triangular matrix. Since $f_{k, s}  g_{k, s} \in  \cL(\ba + \be_k) \setminus \cL(\ba)$ from Lemma \ref{lemma_dim} we get $\dim \cL(\ba + \be_k) =  \dim \cL(\ba) + 1$ hence $\mathbf{y} \cdot \varphi(f_{k, s} \, g_{k,  s}) \neq 0$ for all $s = 1, \ldots, \ell_k$.
\end{proof}

\begin{definition}
Let $\ba,\bb \in \no^m$ be such that $a_i \leq b_i$ for all $i = 1, \ldots, m$. 
We call a {\em path from $\ba$ to $\bb$} a finite sequence of $m$-tuples ${\cal P}:= (\ba_0, \ba_1, ..., \ba_r)$,  where $\ba_i \in \no^m$ for all $i \in \{0, \ldots, r\}$,   $\ba_0 = \ba$,  $\ba_r  = \bb$ and for any $i \in \{0, \ldots, r -1\}$ we have  $\ba_{i + 1}  = \ba_i + \be_{p(i)}$  for some $p(i) \in \{1, \ldots, m\}$ which is called the {\em step place} of $\ba_i \in {\cal P}$.
\end{definition}

\begin{lemma} \label{lemma_dim_n} Let $\ba \in \no^m$, then there exists $\bb \in \no^m$ such that $\dim C(\bb) = n$ and $a_i \leq b_i$ for all $i \in \{1, \ldots, m\}$.
\end{lemma}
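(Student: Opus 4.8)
The plan is to exploit the surjectivity of $\varphi$ together with the finiteness of $\dim_\F \F^n = n$: a surjection onto a finite-dimensional space is already ``saturated'' on a finite-dimensional subspace, and any subspace spanned by finitely many elements of $\R$ lands inside some $\cL(\bb)$ once $\bb$ is taken coordinatewise large. Concretely, I would first fix an $\F$-basis $v_1, \ldots, v_n$ of $\F^n$ and, using that $\varphi$ is onto, choose $h_1, \ldots, h_n \in \R$ with $\varphi(h_j) = v_j$ for every $j$. Since each $v_j \neq 0$ we have $h_j \neq 0$, so by (N0) the value $\rho_i(h_j)$ lies in $\no$ (it is not $-\infty$) for all $i, j$.

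Next I would simply choose the box large enough to contain these finitely many elements and to dominate $\ba$ at the same time: set
\[
b_i := \max\bigl(\{a_i\}\cup\{\rho_i(h_j) : 1 \leq j \leq n\}\bigr) \qquad (i = 1, \ldots, m),
\]
and $\bb := (b_1, \ldots, b_m) \in \no^m$. Then $a_i \leq b_i$ for all $i$ by construction, and $\rho_i(h_j) \leq b_i$ for all $i,j$, i.e. $h_j \in \cL(\bb)$ for every $j$. Consequently $C(\bb) = \varphi(\cL(\bb)) \supseteq \mathrm{span}_\F\{\varphi(h_1), \ldots, \varphi(h_n)\} = \mathrm{span}_\F\{v_1, \ldots, v_n\} = \F^n$, and since always $C(\bb) \subseteq \F^n$ we get $\dim C(\bb) = n$, which is exactly what is claimed.

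I do not expect a genuine obstacle here; the only point worth recording is that $\bigcup_{\bb \in \no^m}\cL(\bb) = \R$ (a nonzero $f$ lies in $\cL(\bb)$ for $b_i := \rho_i(f)$, and $0 \in \cL(\mathbf 0)$), so that $\varphi$ of this union is all of $\F^n$, which is what forces some finite box to suffice. If one prefers to stay in the style of the preceding lemmas, one can instead start from any $\bb_0$ with $\bb_0 \geq \ba$, invoke Lemma \ref{lemma_dim} to see that passing from $\bb$ to $\bb + \be_k$ raises $\dim C(\bb)$ by at most $1$, and note that $\dim C(\bb)$ cannot remain below $n$ for all $\bb$ because the spaces $C(\bb)$ exhaust $\varphi(\R) = \F^n$; the direct choice above is shorter. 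Note also that the ``complete set'' hypothesis on $\{\rho_1, \ldots, \rho_m\}$ is not needed for this statement.
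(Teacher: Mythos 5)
Your proof is correct and takes essentially the same route as the paper's: lift a basis of $\F^n$ through the surjection $\varphi$ and choose $\bb$ coordinatewise large enough to contain the lifts and dominate $\ba$ (the paper takes $b_i := a_i + \max_j \rho_i(f_j)$ instead of your maximum, an immaterial difference).
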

\begin{proof} Since $\varphi$ is surjective there are $f_1, \ldots , f_n \in \R$ such that $\{\varphi(f_1), \ldots, \varphi(f_n)\}$ is a basis for $\F^n$, so it suffices to take $b_i := a_i + \max \{\r_i(f_1), \ldots, \r_i(f_n)\}$, where $i \in \{1, \ldots, m\}$ and set $\bb := (b_1, \ldots, b_m)$.
\end{proof}

As a consequence of the above results we have the following bound for the minimum distance of $C(\ba)^\perp$.

\begin{corollary} \label{thm_bound} Let $\ba \in \no^m$ and let $\bb \in \no^m$ be such that $a_i \leq b_i$ for all $i \in \{1, \ldots, m\}$ and $\dim C(\bb) = n$. Given a path ${\cal P} = (\ba_0, \ldots, \ba_r)$ from $\ba$ to $\bb$ 
the minimum distance of $C(\ba)^\perp$ is bounded from below by $\min \{ \nu_{p(i)}(\ba_i) \; | \;  i = 0, \ldots, r - 1 \}$. 
\end{corollary}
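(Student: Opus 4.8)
The plan is to chain together, along the given path $\mathcal{P}$, the information coming from the previous Proposition and from Lemma~\ref{lemma_dim_n}. The key observation is that the dual codes $C(\ba_i)^\perp$ are nested in the reverse direction: since $\cL(\ba_i) \subseteq \cL(\ba_{i+1})$ (Lemma~\ref{lemma_dim}), we have $C(\ba_i) \subseteq C(\ba_{i+1})$ and hence $C(\ba_{i+1})^\perp \subseteq C(\ba_i)^\perp$. Moreover $C(\ba_r)^\perp = C(\bb)^\perp = \{\mathbf 0\}$ because $\dim C(\bb) = n$. So if $\mathbf y \in C(\ba)^\perp = C(\ba_0)^\perp$ is nonzero, there is a well-defined first index $i$ along the path for which $\mathbf y \notin C(\ba_{i+1})^\perp$; equivalently $\mathbf y \in C(\ba_i)^\perp \setminus C(\ba_i + \be_{p(i)})^\perp$, since $\ba_{i+1} = \ba_i + \be_{p(i)}$.

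First I would fix a nonzero codeword $\mathbf y \in C(\ba)^\perp$ and produce this index $i$ as just described, noting that it exists precisely because the chain of dual codes shrinks from $C(\ba)^\perp$ down to $\{\mathbf 0\}$ along the path. Then I would apply the Proposition with $k = p(i)$ and with $\ba$ replaced by $\ba_i$: since $\mathbf y \in C(\ba_i)^\perp \setminus C(\ba_i + \be_k)^\perp$, we get $\mathrm{rank}(M\, D(\mathbf y)\, N) \geq \# N_{p(i)}(\ba_i) = \nu_{p(i)}(\ba_i)$, where $M$ and $N$ are the matrices built from a maximal family $N_{p(i)}(\ba_i)$ of admissible pairs. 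Here I must be a little careful that the matrices $M,N$ in the Proposition's statement are attached to the specific index $k$ and the specific tuple $\ba_i$, but this is exactly the setting in which the Proposition was proved, so nothing new is needed. Finally, the identity $\mathrm{rank}(M\, D(\mathbf y)\, N) = \mathrm{wt}(\mathbf y)$ recorded just before the Proposition (a consequence of $\mathrm{rank}(M) = \mathrm{rank}(N) = n$) gives $\mathrm{wt}(\mathbf y) \geq \nu_{p(i)}(\ba_i) \geq \min\{\nu_{p(j)}(\ba_j) : j = 0, \ldots, r-1\}$. Since $\mathbf y$ was an arbitrary nonzero element of $C(\ba)^\perp$, the minimum distance is bounded below by this quantity.

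The only genuinely delicate point is the existence of the ``first exit'' index $i$, which rests on two facts I would spell out: that $C(\bb)^\perp = \{\mathbf 0\}$, and that the path does reach $\bb$ in finitely many steps with each step adding exactly one $\be_k$. Both are immediate from Lemma~\ref{lemma_dim_n} and the definition of a path, so I expect no real obstacle; the argument is essentially a telescoping over $\mathcal{P}$. One should also mention the degenerate case $C(\ba)^\perp = \{\mathbf 0\}$, where the bound holds vacuously. I would keep the write-up to a few lines, emphasizing the nesting of the duals and the single application of the Proposition at the first index where $\mathbf y$ leaves $C(\ba_{i+1})^\perp$.
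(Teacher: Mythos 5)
Your argument is correct and is exactly the intended one: the paper states this as an immediate consequence of the preceding Proposition, the identity $\mathrm{rank}(M D(\mathbf{y}) N) = wt(\mathbf{y})$, and Lemma \ref{lemma_dim_n}, leaving implicit precisely the ``first exit index'' chaining along the path that you spell out. No discrepancies.
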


At first glance a major drawback of the above result is that it depends on finding $\bb \in \no^m$ such that $\dim \varphi(\cL(\bb)) = n$, while we would like a bound that does not depend on the knowledge of $\varphi$. The following considerations show that we do not have to find such $\bb$ in order to calculate a bound. 

Let $k \in \{1, \ldots, m\}$, from (N5) we get that $\S_k := \rho_k(\cap_{1\leq i \leq m; i \neq k} \cU_{\r_i})$ is a subsemigroup of $\N_0$ and since $\{\r_1, \ldots, \r_m\}$ is a complete set for $\R$ we get $\#(\no \setminus \S_k) < \infty$ (i.e.\ $\S_k$ is a numerical semigroup). Observe also that given  $\ba \in \no^m$ and $t_1, t_2 \in \S_k$ such that $t_1 + t_2 = a_k + 1$ then taking $f_1, f_2 \in \cap_{1\leq i \leq m; i \neq k} \cU_{\r_i} $ such that 
$t_i = \r_k(f_i)$, $i = 1, 2$, we get $(f_1, f_2) \in  N_k(\ba)$ (of course also  $(f_2, f_1) \in  N_k(\ba)$, if $t_2 \neq t_1$).

\begin{lemma} Let $S$ be a numerical subsemigroup of $\N_0$ of genus $g$, let $c$ be the conductor of $S$ and let $u \in \N_0$. If  $N := \{(a, b) : a, b \in S \setminus\{0\}; a + b = 2 c + u \}$ then $\# N = 2(c - g) + u - 1$. 
\end{lemma}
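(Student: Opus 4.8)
The plan is to count directly the ordered pairs of positive integers summing to $2c+u$ and then remove those not lying in $N$. Write $G := \N_0 \setminus S$ for the set of gaps of $S$, so $\# G = g$, and recall the two elementary facts I shall use: $0 \in S$ (so every gap is $\geq 1$), and every gap is $< c$ (by definition of the conductor, every integer $\geq c$ lies in $S$). Put $s := 2c+u$; we may assume $s \geq 2$. There are then exactly $s-1$ ordered pairs $(a,b)$ of positive integers with $a+b=s$, indexed by $a \in \{1, \ldots, s-1\}$, and such a pair lies in $N$ exactly when $a \in S$ and $s-a \in S$.

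First I would count the ``bad'' indices $a \in \{1, \ldots, s-1\}$, i.e.\ those with $a \notin S$ or $s-a \notin S$. If $a \in G$ then $1 \leq a \leq c-1 \leq s-1$, so $a$ is a legitimate index; thus $a \notin S$ excludes exactly $g$ indices. Symmetrically, the set of indices $a$ with $s-a \in G$ is $\{\, s-b : b \in G\,\}$, and for $b \in G$ one has $1 \leq s-b \leq s-1$ (using $b \leq c-1 < s$ and $b \geq 1$), so $s-a \notin S$ also excludes exactly $g$ indices.

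The key step is to check that these two families of bad indices are disjoint: if $a \in G$ and $s-a \in G$ then $a \leq c-1$ and $s-a \leq c-1$, so $s = 2c+u \leq 2c-2$, forcing $u \leq -2$, which is impossible. Hence the number of indices $a$ with $(a, s-a) \in N$ equals $(s-1) - 2g = 2c+u-1-2g = 2(c-g)+u-1$; since distinct indices yield distinct elements of $N$ and exhaust $N$, this gives $\# N = 2(c-g)+u-1$.

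I do not anticipate a real obstacle: the argument is elementary counting. The only delicate points are the bookkeeping at the endpoints of $\{1, \ldots, s-1\}$ — which is exactly where $0 \in S$ and ``gaps lie below $c$'' are needed — and the disjointness step, the one place where the full force of ``$c$ is the conductor'' (not merely an upper bound for the gaps) is used.
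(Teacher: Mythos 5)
Your proof is correct, but it organizes the count differently from the paper. The paper partitions $N$ itself into two pieces: the $2(c-1-g)$ pairs in which one coordinate lies in $\{1,\dots,c-1\}$ (the other coordinate is then automatically at least $c+u+1$, hence in $S$), and the $u+1$ pairs with both coordinates in $\{c,\dots,c+u\}$. You instead count all $2c+u-1$ ordered pairs of positive integers summing to $s=2c+u$ and remove, by inclusion--exclusion, the $g$ indices where $a$ is a gap and the $g$ indices where $s-a$ is a gap. Both arguments rest on exactly the same two facts --- every gap lies in $\{1,\dots,c-1\}$ and every integer $\geq c$ lies in $S$ --- and your disjointness check ($a$ and $s-a$ cannot both be $\leq c-1$ since $s\geq 2c$) is precisely the inequality that keeps the paper's two halves of its first count from overlapping, so the content is the same; your version makes the endpoint bookkeeping fully explicit, while the paper's makes visible where the $u+1$ ``middle'' pairs come from. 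One shared caveat: both proofs (and indeed the statement) tacitly assume $S\neq\N_0$, equivalently $c\geq 2$ (which also guarantees your $s\geq 2$); for $S=\N_0$ and $u=0$ the asserted count $2(c-g)+u-1=-1$ fails, though this degenerate case does not occur in the paper's application to Weierstrass semigroups.
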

\begin{proof} We have $2 (c - 1 - g) $ pairs $(a,b) \in N$ such that either $1 \leq a \leq c - 1$ or $1 \leq b \leq c - 1$. And we have $u + 1$ pairs $(a, b) \in N$ with $c \leq a, b \leq c + u$.
\end{proof}

Let $\ba \in \no^m$ and let $(\ba_i)_{i \in \no} \in \no^m$ be a sequence of $m$-tuples such that $\ba = \ba_0$, $\ba_{i + 1} = \ba_i + \be_{p(i)}$ for some $p(i) \in \{1, \ldots, m\}$ and $\lim_{i \rightarrow \infty} a_{i j} = \infty$,  for all $j \in \{1, \ldots , m\}$ and all $i \in \no$. From (the proof of) Lemma \ref{lemma_dim_n} we know that there exists $r \in \no$ such that $\dim C(\ba_r) = n$.  
For $k \in \{1,\ldots, m\}$,  let $c_k$ be the conductor of $\S_k$,  to calculate the bound indicated in Corollary  \ref{thm_bound}  we should calculate $\nu_{p(i)}(\ba_i)$ for all $i \in \{0, r-1\}$, but we observe that: \\
a) if $\nu_k (\ba) =: h > 2 (c_k - g_k) - 1$ then  set $u := h - 2(c_k - g_k) + 1$;  we shall calculate $\nu_k(\ba_i)$ at most for $m$-tuples $\ba_i$ such that $a_{i k} \leq 2 c_k + u - 1$ (in fact, if $a_{i\, k} + 1 > 2 c_k + u$ we will get $\nu_k (\ba_i) > h$);\\
b) if $\nu_k (\ba) \leq 2 (c_k - g_k) - 1$  then we shall calculate $\nu_k(\ba_i)$ at most for  $m$-tuples $\ba_i$ such that $a_{i\, k} \leq 2 c_k - 1$ (in fact,  if $a_{i\, k} + 1 > 2 c_k$ then $\nu_k (\ba_i) > 2 (c_k - g_k) + 1$).

Thus we do not have to know $r$ to calculate the bound.

The next result shows that geometric Goppa codes supported in $m$ points are instances of the codes we described above.

\begin{theorem} \label{ida} Let $\X$ be a nonsingular, geometrically irreducible, projective algebraic curve defined over $\F$, and let $G := \sum_{i - 1}^m a_i Q_i$ and $D := P_1 + \cdots + P_n$ be divisors on $\X$ such that $\textrm{supp}(G) \cap \textrm{supp}(D) = \emptyset$ and $P_i$ is a rational point, for all $i = 1, \ldots, n$ (hence the Goppa code $C_{\cal{L}}(D,G)$ is the set of $m$-tuples $(h(P_1), \ldots, h(P_m))$, where $h \in L(G)$). Then taking $\R := \cap_{Q \in \X\, ; \, Q \neq Q_1, \ldots, Q_m} \O_Q$, where $\O_Q$ is the local ring at $Q \in \X$, and defining $\varphi(f) := (f(P_1), \ldots, f(P_n))$ there exists a complete set of $m$ near weights on $\R$ such that  $C_{\cal{L}}(D,G) = C(\ba)$, where $\ba := (a_1, \ldots, a_m)$ .
\end{theorem}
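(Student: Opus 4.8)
The plan is to build the $m$ near weights directly from the valuations at the points $Q_1,\ldots,Q_m$. For each $i\in\{1,\ldots,m\}$ let $v_{Q_i}$ be the normalized discrete valuation of the function field $\F(\X)$ associated to the point $Q_i$, and define $\rho_i:\R\ra\no\cup\{-\infty\}$ by $\rho_i(0)=-\infty$ and $\rho_i(f):=\max\{0,-v_{Q_i}(f)\}$ for $f\neq 0$. One must check that each $\rho_i$ is a nontrivial, normalized near weight in the sense of (N0)--(N5). Properties (N0), (N1), (N2) and (N5) are immediate from the axioms of a valuation (note $v_{Q_i}(\lambda f)=v_{Q_i}(f)$ for $\lambda\in\F^*$, $v_{Q_i}(f+g)\geq\min$, $v_{Q_i}(fg)=v_{Q_i}(f)+v_{Q_i}(g)$), with the truncation at $0$ accounting for the elements of $\R$ that are regular at $Q_i$; (N3) follows since $\rho_i(f)<\rho_i(g)$ forces $v_{Q_i}(f)>v_{Q_i}(g)$ (when both are negative) or $f$ is regular at $Q_i$, and multiplying by $h$ with $h\in\M_{\rho_i}$, i.e.\ $v_{Q_i}(h)<0$, strictly decreases valuations; (N4) is the standard statement that two functions with the same pole order at $Q_i$ can be combined, using a suitable scalar from the residue field $\F$ (here one uses that $Q_i$ is a rational branch so the residue field is exactly $\F$), to cancel the leading term. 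Since $\U_{\rho_i}=\R\cap\O_{Q_i}$, it is already normalized.

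Next I would verify that $\{\rho_1,\ldots,\rho_m\}$ is a complete set for $\R$ in the sense of Definition \ref{completeset}. For the first condition, $\cap_{i=1}^m\U_{\rho_i}=\R\cap\bigcap_i\O_{Q_i}$ is the ring of functions regular at every point of $\X$, hence regular everywhere, hence equal to $\F$ since $\X$ is projective, geometrically irreducible and nonsingular. For the second condition, fix $k$; then $\cap_{i\neq k}\U_{\rho_i}=\R\cap\bigcap_{i\neq k}\O_{Q_i}$ is the ring of functions regular on $\X\setminus\{Q_k\}$, i.e.\ $L(\infty Q_k):=\bigcup_{t\geq 0}L(tQ_k)$, and $\rho_k$ applied to this ring gives precisely the Weierstrass semigroup of $Q_k$ (together with $0$), whose complement in $\no$ is finite by the Weierstrass gap theorem. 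So the completeness condition reduces to the classical finiteness of gaps.

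Finally I would identify the codes. For $\ba=(a_1,\ldots,a_m)$ and $G=\sum a_iQ_i$, an element $f\in\R$ lies in $\cL(\ba)$ iff $\rho_i(f)\leq a_i$ for all $i$, i.e.\ $-v_{Q_i}(f)\leq a_i$ (the truncation at $0$ is harmless when $a_i\geq 0$), i.e.\ $v_{Q_i}(f)\geq -a_i$ for all $i$; since $f\in\R$ already imposes $v_Q(f)\geq 0$ at every $Q\notin\{Q_1,\ldots,Q_m\}$, and $\mathrm{supp}(G)\cap\mathrm{supp}(D)=\emptyset$ guarantees the $P_j$ are among those points, this says exactly $\div(f)+G\geq 0$, i.e.\ $f\in L(G)$. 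Hence $\cL(\ba)=L(G)$ and $C(\ba)=\varphi(\cL(\ba))=\{(h(P_1),\ldots,h(P_n)):h\in L(G)\}=C_{\cal L}(D,G)$.

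The main obstacle is the careful verification of axioms (N3) and (N4) for the truncated functions $\rho_i$: the interaction between the max-with-$0$ truncation and multiplication means one has to split into cases according to whether the relevant functions are regular at $Q_i$ or have genuine poles there, and in (N4) one must make sure the scalar that cancels the leading coefficient can be chosen in $\F^*$ rather than in a larger residue field — this is exactly where the hypothesis that each $Q_i$ contributes a single \emph{rational} branch is used. Everything else is a translation of standard facts about divisors, Riemann--Roch spaces, and Weierstrass semigroups on a smooth projective curve.
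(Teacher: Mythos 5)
Your proposal is correct and follows essentially the same route as the paper: define $\rho_i(f)=\max\{0,-v_{Q_i}(f)\}$, check (N0)--(N5) (with rationality of the branch at $Q_i$ giving the scalar in (N4)), identify $\cap_{i\neq k}\U_{\rho_i}$ with the functions regular off $Q_k$ so that $\S_k$ is the Weierstrass semigroup, and match $\cL(\ba)$ with $L(G)$. The only piece the paper includes that you omit is the verification that $\varphi$ is surjective (done there via $\R/(M_{P_i}\cap\R)\cong\O_{P_i}/M_{P_i}$ and the Chinese Remainder Theorem), which is needed because the notation $C(\ba)=\varphi(\cL(\ba))$ is set up in Section 2 only for surjective $\varphi$; this is easily supplied and does not affect the substance of your argument.
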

\begin{proof} Observe that $\R$ is the $\F$-subalgebra of $\F(\X)$ consisting of the functions regular on $\X' := \X \setminus \{Q_1,  \ldots, Q_m\}$. Denoting by $v_k$ the discrete valuation of $\F(\X)$ associated to $Q_k$ ($k \in \{1, \ldots,m\}$), one easily checks that the function $\r_k : \R \rightarrow \no \cup \{\infty\}$  defined by  $\r_k(0)  = - \infty$, $\r_k(f) = 0$ if $v_k(f) \geq 0$ and $\r_k(f) = - v_k(f)$ if $v_k(f) < 0$, for all $f \in \R \setminus \{0\}$ is an \nw\ for all $k \in \setm$. We have $\U_{\r_k} = \R \cap \O_{Q_k}$ and $\M_k = \R \setminus \O_{Q_k}$ for all $k \in \{1, \ldots,  m\}$. Moreover, since $\cap_{Q  \in \X} \O_Q = \F$ (because $\X$ is geometrically irreducible) and $\S_k := \rho_k(\cap_{1\leq i \leq m\, ; \,  i \neq k} \cU_{\r_i}) = \r_k(\cap_{Q  \in \X, \, Q \neq Q_k} \O_Q) $ 
is the Weierstrass semigroup at $Q_k$ for all $k \in \setm$ (hence it has finite genus) we get that $\{\r_1 , \ldots, \r_m\}$ is a complete set of \nw s for $\R$.

Denoting by $M_{P_i}$ the maximal ideal of $\O_{P_i}$ we get that $\R/(M_{P_i} \cap \R) \cong  \O_{P_i}/M_{P_i}$ for all $i \in \{1,  \ldots, n\}$ (see e.g.\ \cite[Prop.\ III.2.9]{sti}), hence $\F^n \cong \R/(M_{P_1} \cap \R) \times \cdots \times \R/(M_{P_n} \cap \R)$ and from the Chinese Remainder Theorem $\varphi$ is an epimorphism. We also have 
$\cL(\ba) = \{ f \in \R \; | -v_k(f) \leq a_k, k = 1, \ldots, m\} =  \{ f \in \F(\X)^* \; | \; \div(f) + \sum_{i = 1}^m a_i Q_i \geq 0\} \cup \{0\} = L(G)$ hence $C(\ba) = C_{\cal{L}}(D,G)$. 
\end{proof} 

Now we present examples which show that when applied to a geometric Goppa code, the bound for the minimum distance found above may be better than the usual Goppa bound.

\begin{examples}
 Let $\X$ be the hermitian curve given by $Y^3 Z +  Y Z^3 - X^4 = 0$ and defined over the field $\F_{3^2}$. Take $Q_1$, $Q_2$ and $Q_3$ to be three distinct rational points, say the  points in the intersection of $\X$, the open set $Z \neq 0$ and the line $X = 0$, let $\ba := (a_1,a_2, a_3) \in \no^3$ and denote by ${\cal C}_{\cL}(D,G_{\ba})$ the geometric Goppa code associated to the divisors $G_{\ba} := a_1 Q_1 + a_2 Q_2 + a_3 Q_3$ and $D = P_1 + \cdots + P_n$, where $P_1, \ldots, P_n$ are distinct rational points, different from $Q_1$, $Q_2$ and $Q_3$. The genus of $\X$ is $3$ and the so-called Goppa bound for the code ${\cal C}_{\cL}(D,G_{\ba})^{\perp}$ is $d_{\ba}:= \deg G_{\ba} - (2 g - 2) = \sum_{i =1}^3 a_i - 4$. Note that $\S_i$ is the semigroup generated by 3 and 4, so  the conductor is 6, for all $k = 1, 2, 3$. To find a bound as described in Corollary \ref{thm_bound} it is useful to know the set $\{(\r_1(f), \r_2(f),\r_3(f)) \in \no^3\;  | \; f \in \R\}$, which in this case is exactly the Weierstrass semigroup ${\cal W}$ associated to $\{Q_1, Q_2, Q_3 \}$, i.e.\ the set $\S = \{(n_1, n_2 , n_3) \in \no^3 \; | \; \div_{\infty} (f) = n_1 Q_1 + n_2 Q_2 + n_3 Q_3 \}$, where $\div_{\infty}(f)$ denotes the pole divisor of $f$. Such semigroups have been much studied in the last decade (see e.g.\ \cite{kim}, \cite{homma-kim}, \cite{gretchen}, \cite{car-tor}), and in \cite{gretchen} we find an explicit description of a generating set for this semigroup, so that 
we may decide if an element of $\no^3$ is or is not in $\S$. Thus,  given $\ba \in \no^3$ we proceed as follows.  For $k = 1, 2, 3$ we calculate $\nu_k(\ba)$, if $\nu_k(\ba) > 2 (6 - 3) - 1 = 5$ then set $A_k := 2\cdot 6  + (\nu_k(\ba) - 5) - 1$, if $\nu_k(\ba) \leq 5$ the we set $A_k := 2\cdot 6 - 1 = 11$. 
Let $r := \sum_{i = 1}^3 (A_i - a_i)$ and consider the path ${\cal P}$ from $\ba$ to $ (A_1, A_2, A_3)$ given by $(\ba_0, \ldots, \ba_r)$ where $\ba_0 = \ba$, $\ba_r = (A_1,  A_2, A_3)$, $\ba_i = \ba + i \be_1$, for $i \in \{ 1,  \ldots, A_1 - a_1 \}$, $\ba_{A_1 - a_1 + j}  = \ba + (A_1 - a_1) \be_1 + j \be_2$, for $j \in \{ 1, \ldots, A_2 - a_2\}$,  and $\ba_{A_1 - a_1 + A_2 - a_2 + k} = \ba + (A_1 - a_1) \be_1 + (A_2 - a_2) \be_2 + k \be_3$, for $k \in \{1,  \ldots, A_3 - a_3\}$. 
From the considerations that precede these examples we get that 
 $\delta_{\ba} := \min \{ \nu_{p(i)}(\ba_i) \; | \;  i = 0, \ldots, r -1 \}$ is a bound for the minimum distance of ${\cal C}_{\cL}(D,G_{\ba})^{\perp}$.

Let $ \rb(N_k(\ba)) := \{ ( (\r_1(f_{k,i}), \r_2(f_{k,i}), \r_3(f_{k,i})), (\r_1(g_{k,i}), \r_2(g_{k,i}), \r_3(g_{k,i})) ) \; | \; i = 1, \ldots, \ell_k \}$, 
taking $\ba = (2, 1, 1)$ we have $\nu_1(\ba) = 2$ (with  $\rb(N_1(\ba)) = \{ ((0,0,0),$ $ (3,0,0)), ((3,0,0), (0,0,0)) \})$, $\nu_2(\ba) = 2$ (with $\rb(N_2(\ba)) = \{ ((0,0,0), (0,3,0)), $ $ ((0,3,0), (0,0,0)) \}$),  and $\nu_3(\ba) = 3$ (with $\rb(N_3(\ba)) = \{ ( ( 0, 0, 0 ), ( 0, 2, 2 ) ) , $ \linebreak $ ( ( 1, 1, 1 ), ( 1, 1, 1 ) ), ( ( 0, 2, 2 ), ( 0, 0, 0 ) )
   \}$).

Thus  $(A_1, A_2,  A_3) = (11, 11, 11)$ and inspecting  ${\cal W}$ we get  that $\delta_\ba = 2$,  while $d_\ba = 0$. In the table below we present results for this and other values of $\ba$. \\

\begin{table}[!h]
\begin{center}
\begin{tabular}{ccccc}   $\ba$ & $(\nu_1(\ba), \nu_2(\ba), \nu_3(\ba))$ & $(A_1, A_2,  A_3)$ & $\delta_\ba$ & $d_\ba$ \\
$(2,1,1)$ & (2,2,2) & (11,11,11) & 2 & 0 \\
$(1,2,1)$ & (2,2,2) & (11,11,11) & 2 & 0 \\
$(1,1,2)$ & (2,2,2) & (11,11,11) & 2 & 0 \\
$(2,2,1)$ & (2,2,3) & (11,11,11) & 2 & 1 \\
$(2,1,2)$ & (2,3,2) & (11,11,11) & 2 & 1 \\
$(1,2,2)$ & (3,2,2) & (11,11,11) & 2 & 1 \\
$(2,2,2)$ & (3,3,3) & (11,11,11) & 2 & 2 \\
$(3,2,2)$ & (4,4,4) & (11,11,11) & 3 & 3 \\
$(2,3,2)$ & (4,4,4) & (11,11,11) & 3 & 3 \\
$(2,2,3)$ & (4,4,4) & (11,11,11) & 4 & 3 
\end{tabular} 
\end{center}
\caption{Bounds for $\delta_\ba$ and $d_\ba$; code $C(\ba)^\perp$;  curve $Y^3 Z +  Y Z^3 - X^4 = 0$, defined over $\F_{9}$.}
\end{table}
 
We also present a similar table, containing examples of codes from the hermitian curve given by $Y^4 Z +  Y Z^4 - X^5 = 0$ and defined over the field $\F_{16}$.  Again, we take 
$Q_1$, $Q_2$ and $Q_3$ to be three distinct rational points of the curve, now ${\cal S}_i$ is the semigroup generated by 4 and 5, so the conductor is 12, for $i = 1, 2, 3$; the genus of the curve is 6. \\

\begin{table}[!h]
\begin{center}
\begin{tabular}{ccccc}   $\ba$ & $(\nu_1(\ba), \nu_2(\ba), \nu_3(\ba))$ & $(A_1, A_2,  A_2)$ & $\delta_\ba$ & $d_\ba$ \\
$(1,2,3)$ & (2,2,2) & (23,23,23) & 2 & -4 \\
$(3,1,3)$ & (2,2,2) & (23,23,23) & 2 & -3 \\
$(3,2,3)$ & (2,2,2) & (23,23,23) & 2 & -2 \\
$(3,3,3)$ & (2,2,2) & (23,23,23) & 2 & -1 \\
$(4,3,2)$ & (2,2,2) & (23,23,23) & 2 & -1 \\
$(4,3,3)$ & (2,2,2) & (23,23,23) & 2 &  0 \\
$(4,4,3)$ & (2,2,3) & (23,23,23) & 2 &  1 
\end{tabular}
\end{center}
\caption{Bounds for $\delta_\ba$ and $d_\ba$; code $C(\ba)^\perp$; curve $Y^4 Z +  Y Z^4 - X^5 = 0$; defined over $\F_{16}$. }
\end{table}

\end{examples}

\section{Algebras with near weights and algebraic curves}

In this section we present a characterization for algebras which admit a complete set of \nw s.

\begin{lemma} \label{1st_lemma} Let $\R$ be an $\F$-algebra and $\r$ an \nw. Let $f, g \in \R$ be such that $\r(f) > 0$, $\r(g) = 0$,  $g \notin \F$ and $\r(f g) < \r(f)$. Then for any $\lambda \in \F^*$ we have $\r(f (g + \lambda) ) = \r(f)$ and $\r(g + \lambda) = 0$.
\end{lemma}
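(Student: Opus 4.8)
The two assertions are logically independent, so the plan is to prove them separately, handling $\r(g+\lambda)=0$ first since the other part does not rely on it.

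For $\r(g+\lambda)=0$, I would argue as follows. Since we work with normalized near order functions, $\r(1)=0$, so (N1) gives $\r(\lambda)=\r(\lambda\cdot 1)=0$ for every $\lambda\in\F^*$. Applying (N2) to the pair $g,\lambda$ then yields $\r(g+\lambda)\le\max\{\r(g),\r(\lambda)\}=0$, so $\r(g+\lambda)\in\{-\infty,0\}$. The value $-\infty$ is ruled out by (N0): it would force $g+\lambda=0$, i.e.\ $g=-\lambda\in\F$, contradicting the hypothesis $g\notin\F$. Hence $\r(g+\lambda)=0$. This is the only step that actually uses $g\notin\F$, and it is really the point of the lemma.

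For $\r(f(g+\lambda))=\r(f)$, I would write $f(g+\lambda)=fg+\lambda f$, note that $\r(\lambda f)=\r(f)$ by (N1), and recall that $\r(fg)<\r(f)=\r(\lambda f)$ by hypothesis. Since these two values are distinct, the second part of the Lemma stated in Section~1 (namely that $\r(p)\ne\r(q)$ implies $\r(p+q)=\max\{\r(p),\r(q)\}$, an immediate consequence of (N2)) gives $\r(fg+\lambda f)=\max\{\r(fg),\r(\lambda f)\}=\r(f)$, as required.

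I do not anticipate any genuine difficulty here: the argument is a short manipulation of (N0)--(N2) together with the ultrametric-type equality already recorded in Section~1, and no deeper structure of $\R$ or of $\r$ is needed. In particular the hypothesis $\r(f)>0$ plays no role in the conclusion as stated — it merely places $f$ in $\M_\r$, which is presumably the context in which the lemma is to be applied later; the substantive content is only that a nonzero scalar translation of $g$ leaves $\r(g)$ unchanged (because $g\notin\F$ keeps $g+\lambda$ away from $0$) and does not spoil the strict drop $\r(fg)<\r(f)$ (because $\r(\lambda f)=\r(f)$ dominates $\r(fg)$).
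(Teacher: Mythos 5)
Your proof is correct and follows essentially the same route as the paper: expand $f(g+\lambda)=fg+\lambda f$ and use $\r(fg)<\r(f)=\r(\lambda f)$ together with the ``distinct values give equality in (N2)'' fact for the first claim. The only (harmless) difference is in the second claim, where the paper invokes that $\U_\r$ is an $\F$-subalgebra via (N5) to get $g+\lambda\in\U_\r$, whereas you use only (N0)--(N2) and normalization --- and you are in fact slightly more careful in making explicit that $g\notin\F$ rules out $g+\lambda=0$.
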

\begin{proof} Let $\lambda \in \F^*$, then $\r(f (g + \lambda)) = 
\r(f g + \lambda f )) \leq \max \{ \r(f g) ,  \r(f)\}$. Since $\r(f g) < \r(f)$ we get $\r(f (g + \lambda)) = \r(f)$ . We also have $g + \lambda \in \U_\r$ since $\U_\r$ is an $\F$-subalgebra of $\R$.
\end{proof}

Let $\R$ be an $\F$-algebra which admits a (not necessarily complete) set of \nw s $\{\r_1, \ldots, \r_m\}$. Let $\rb: \R \setminus \{0\} \ra \no^m$ be the map defined by $\rb(f) :=  (\r_1(f), \ldots, \r_m(f))$ and let $\h_{\r_1,\ldots, \r_m} =\h := \rb(\R \setminus \{0\})$.

We will always assume that if the field $\F$ is finite then $\#(\F) \geq m$.

\begin{definition} Let $\ba_i \in \no^m$,  with $i = 1, \ldots, r$. We define the {\em least upper bound of} $\ba_1, \ldots, \ba_r$ as being the $m$-tuple $\lub(\ba_1, \ldots, \ba_r) := (b_1, \ldots, b_m)$ where $b_j = \max \{a_{j 1}, \ldots, a_{j r}\}$ for all $j = 1, \ldots, m$. 
\end{definition}

\begin{proposition} \label{prop_lub} Let $\ba_1, \ldots, \ba_r \in \h$,  then $\lub(\ba_1, \ldots ,\ba_r) \in \h$. Furthermore, if $f_1, \ldots, f_r \in \R$ are such that $\br(f_i) = \ba_i$ for all $i \in \{1, \ldots, r\}$ then there exists $f \in \R$, $f = \sum_{i = 1}^r \l_i f_i$, where $\l_1, \dots, \l_r \in \F$ such that $\br(f) = \lub(\ba_1, \ldots, \ba_r)$.
\end{proposition}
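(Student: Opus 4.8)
The plan is to establish the stronger second assertion; the first follows at once, since a nonzero $f$ with $\br(f) = \lub(\ba_1,\ldots,\ba_r)$ exhibits $\lub(\ba_1,\ldots,\ba_r)$ as an element of $\h = \br(\R\setminus\{0\})$. I would induct on $r$. The case $r=1$ is immediate ($f = f_1$). For $r > 2$, observe that $\lub(\ba_1,\ldots,\ba_r) = \lub\bigl(\lub(\ba_1,\ldots,\ba_{r-1}),\,\ba_r\bigr)$; by the induction hypothesis there is $g = \sum_{i=1}^{r-1}\mu_i f_i$ with $\mu_i \in \F$ and $\br(g) = \lub(\ba_1,\ldots,\ba_{r-1})$, and applying the case $r=2$ to the pair $g, f_r$ produces $f = \alpha g + \beta f_r$ with $\br(f) = \lub(\br(g),\ba_r) = \lub(\ba_1,\ldots,\ba_r)$; expanding $g$ shows $f$ is an $\F$-linear combination of $f_1,\ldots,f_r$. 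So it suffices to treat $r = 2$.

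Write $\bb := \lub(\ba_1,\ba_2)$, so $b_j = \max\{a_{1j},a_{2j}\}$ for each $j$. If $\ba_1 \le \ba_2$ coordinatewise then $\bb = \ba_2$ and $f := f_2$ works; symmetrically if $\ba_2 \le \ba_1$. In the remaining case $\ba_1$ and $\ba_2$ are incomparable, and I would search for $f$ of the form $f_1 + \lambda f_2$ with $\lambda \in \F^*$ — by (N1) this is, up to a scalar, the general $\F$-linear combination of $f_1, f_2$ with both coefficients nonzero, and both must be nonzero here because $\bb$ strictly exceeds each of $\ba_1, \ba_2$ in some coordinate. By (N1) and (N2) we always have $\r_j(f_1 + \lambda f_2) \le b_j$; I would then show that for all but finitely many $\lambda$ equality holds in every coordinate. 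If $a_{1j}\neq a_{2j}$, then by the basic Lemma of Section~1 (part ii) the maximum is attained: $\r_j(f_1 + \lambda f_2) = b_j$ for every $\lambda \in \F^*$. If $a_{1j} = a_{2j} = 0$, then $\r_j(f_1 + \lambda f_2)$ is at most $0 = b_j$ and is negative only if $f_1 + \lambda f_2 = 0$, which cannot occur here (it would force $\br(f_1) = \br(f_2)$ by (N1), against incomparability). If $a_{1j} = a_{2j} = b_j > 0$, then $f_1, f_2 \in \M_{\r_j}$, so (N4) provides a value $\lambda_j \in \F^*$ with $\r_j(f_1 + \lambda_j f_2) < b_j$, and the uniqueness part of that same Lemma shows $\lambda_j$ is the only bad value for this coordinate, i.e.\ $\r_j(f_1 + \lambda f_2) = b_j$ for $\lambda \neq \lambda_j$.

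It remains to choose $\lambda$ avoiding the forbidden values $\lambda_j$ coming from the coordinates $j$ in $B := \{j : a_{1j} = a_{2j} > 0\}$. Since $\ba_1$ and $\ba_2$ are incomparable, there are two distinct coordinates on which they differ, hence $\#B \le m - 2$ and there are at most $m-2$ forbidden values; because $\#\F^* \ge m - 1$ (this is exactly where the standing assumption $\#\F \ge m$ for finite $\F$ enters), an admissible $\lambda \in \F^*$ exists, and $f := f_1 + \lambda f_2$ has $\br(f) = \bb$, which closes the induction. The only delicate point — the main obstacle — is this counting: one must use the uniqueness statement in the Section~1 Lemma to lose at most one value of $\lambda$ per ``tied positive'' coordinate, and use incomparability to free up two coordinates, so that the bound $\#\F \ge m$ is precisely sufficient. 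The comparable subcases and the manipulations with (N1)/(N2) are routine.
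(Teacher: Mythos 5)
Your proof is correct and follows essentially the same route as the paper's: reduce to $r=2$, take $f_1+\lambda f_2$, note that each coordinate with $a_{1j}=a_{2j}>0$ forbids at most one value of $\lambda$ by the uniqueness in (N4), and use incomparability to bound the number of such coordinates by $m-2$ so that $\#\F\ge m$ suffices. Your only (cosmetic) departure is splitting into comparable versus incomparable pairs where the paper splits on $\ba_1=\ba_2$ versus agreement in exactly $m-1$ coordinates versus the rest; both reach the same count.
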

\begin{proof} Since $\lub(\ba_1,  \ldots, \ba_j) = \lub(\lub(\ba_1, \ldots, \ba_{j - 1}),  \ba_j)$ for all $j = 2, \ldots, r$ it suffices to prove the case where $r = 2$. Let $f ,g \in \R$ be such that $\br(f) = \ba_1$ and $\br(g) = \ba_2$. If $\ba_1 = \ba_2$ then the result is trivial, so we will assume that $\ba_1 \neq \ba_2$, a fortiori $f \neq \lambda g$ for all $\l \in \F^*$. If $\#(\{j \; | \; a_{1 j} = a_{2 j}\}) = m - 1$ then $\lub (\ba_1, \ba_2) \in \{ \ba_1, \ba_2\}$, so we assume that $\#(\{j \; | \; a_{1 j} = a_{2 j}\}) \leq m - 2$.
Let $i \in \setm$, if $\r_i (f) \neq \r_i(g)$ then $\r_i(f + \l g) = \max \{ \r_i (f), \r_i(g)\}$ for all $\l \in \F^*$;  if $\r_i(f) = \r_i(g) = 0$ then for all $\l \in \F^*$ we get $\r_i(f + \l g) = 0$; if $\r_i(f) = \r_i(g) \neq 0$ then there exists a unique $\l_i \in \F^*$ such that $\r_i (f - \l_i g) < \r_i(f)$, hence for all $\l \in \F^*$, $\l \neq - \l_i$ we get $\r_i(f + \l g) = \r_i(f)$. Since $\#(\F) - 1 >  m - 2$ there exists $\l \in \F^*$ such that $\r_i(f + \l g) = \max \{\r_i(f) , \r_i(g)\}$ for all $i \in \setm$.
\end{proof}

\begin{lemma} \label{lemma_dif} Let $\ba$ and $\bb$ be distinct elements of $\h$ and suppose that $a_j = b_j $ for some $j \in \{1, \ldots, m\}$. Then there exists $\bc \in \h$ such that: \\
i) $c_i = \max \{a_i , b_i\}$ for $i \neq j$ and $a_i \neq b_i$; \\
ii) $c_i \leq a_i$ for all $i \neq j$ and $a_i = b_i$; \\
iii) $c_j = a_j = 0$ or $c_j < a_j$.
\end{lemma}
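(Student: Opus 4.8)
The plan is to construct $\bc$ as the $\br$-value of a suitable element obtained from functions realizing $\ba$ and $\bb$. Pick $f, g \in \R \setminus \{0\}$ with $\br(f) = \ba$ and $\br(g) = \bb$. Since $a_j = b_j$, either this common value is $0$, or it is positive. The heart of the matter is to "cancel out" the $j$-th coordinate while controlling what happens in the other coordinates. When $a_j = b_j > 0$, both $f$ and $g$ lie in $\M_{\r_j}$, so by (N4) there is a unique $\l \in \F^*$ with $\r_j(f - \l g) < \r_j(f) = a_j$; set $h := f - \l g$ and $\bc := \br(h)$ (note $h \neq 0$ because $\ba \neq \bb$ forces $f \neq \l g$). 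When $a_j = b_j = 0$, I will instead need a slightly different device: I want an element whose $j$-th coordinate is $0$ and whose behaviour on the coordinates where $a_i \neq b_i$ is dominated by the max. Here $f, g \in \U_{\r_j}$, so $f - \l g \in \U_{\r_j}$ automatically for every $\l$, giving $c_j = 0$; I then choose $\l$ to also handle the other coordinates, as below.

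**Choosing $\l$ to control the remaining coordinates.**

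For $i \neq j$ I split into the two cases of the statement. If $a_i \neq b_i$, then $\r_i(f) \neq \r_i(g)$, so by part (ii) of the first Lemma (or directly (N2) together with that Lemma), $\r_i(f - \l g) = \max\{a_i, b_i\}$ for \emph{every} $\l \in \F^*$; this gives (i) for free once $h = f - \l g$ is used, and it does not constrain $\l$. If instead $a_i = b_i$ (and $i \neq j$), then by (N2) we have $\r_i(f - \l g) \leq \max\{\r_i(f), \r_i(\l g)\} = a_i$, which is exactly (ii), again with no constraint on $\l$. So in fact the \emph{only} constraint on $\l$ comes from the $j$-th coordinate: in the case $a_j = b_j > 0$ it is forced to be the unique $\l$ from (N4), and in the case $a_j = b_j = 0$ any $\l \in \F^*$ works (we need $\#\F \geq 2$, which holds since $\#\F \geq m \geq 1$; and if $m = 1$ the statement is vacuous as there is no $i \neq j$, so we may assume $\#\F \geq 2$). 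In either case we set $\bc := \br(f - \l g)$, which lies in $\h$ by definition of $\h$.

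**Verifying the three conclusions and the expected obstacle.**

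With this choice, (i) holds because for $i \neq j$ with $a_i \neq b_i$ we showed $c_i = \max\{a_i, b_i\}$; (ii) holds because for $i \neq j$ with $a_i = b_i$ we showed $c_i \leq a_i$; and (iii) holds by construction: if $a_j = b_j = 0$ then $h \in \U_{\r_j}$ so $c_j = 0 = a_j$, while if $a_j = b_j > 0$ then $c_j = \r_j(f - \l g) < \r_j(f) = a_j$ by the defining property of the $\l$ from (N4). The one point requiring a little care — the main (though modest) obstacle — is checking that $h = f - \l g \neq 0$, so that $\br(h)$ is defined and (N0) does not intervene: this is where the hypothesis $\ba \neq \bb$ is used, since $h = 0$ would give $f = \l g$ and hence $\br(f) = \br(g)$ by (N1), contradicting $\ba \neq \bb$. (In the $a_j = b_j = 0$ case one should note that even if some coordinate of $f - \l g$ drops, we never get $f - \l g = 0$ for the same reason.) Everything else is a direct application of (N2) and part (ii) of the first Lemma of the paper, so no genuine difficulty is anticipated beyond organizing these cases cleanly.
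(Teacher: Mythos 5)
Your proof is correct and takes essentially the same route as the paper's: realize $\ba$ and $\bb$ by functions $f,g$, set $\bc := \br(f - \l g)$ with $\l$ given by (N4) when $a_j = b_j > 0$ (the paper takes $f+g$ when $a_j = b_j = 0$, which is just your ``any $\l$ works'' case), and control the remaining coordinates via (N2) and part (ii) of the first Lemma. The only difference is that you make explicit the coordinate-by-coordinate verification and the nonvanishing of $f - \l g$, which the paper leaves implicit.
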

\begin{proof} Let $f, g \in \R$ such that $\br(f) = \ba$ and $\br(g) = \bb$. If $a_j = b_j = 0$ then it suffices to take $\bc = \br(f + g)$. If $a_j = b_j > 0$ then $f, g \in \M_{\r_j}$ and there exists $\l \in \F^*$ such that $\r_j (f - \l g) < a_j$,  so we take $\bc = \br(f - \l g)$.
\end{proof}

Let $\ll$ be the (partial) ordering in $\no^m$ given by the relation $\ba \ll \bb$ if $a_i \leq b_i$ for all $i \in \setm$. 

\begin{proposition} \label{minimal} Let $\ba \in \h$, then the following assertions are equivalent: \\
i)  $\ba$ is a minimal element of the set $\{ \bc \in \h \; | \; c_k = a_k\}$ for some $k \in \setm$ such that $a_k > 0$; \\
ii) $\ba$ is a minimal element of the set $\{ \bc \in \h \; | \; c_i = a_i\}$ for all $i \in \setm $ such that $a_i > 0$.
\end{proposition}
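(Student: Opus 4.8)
The plan is to establish the two implications separately; (ii) $\Rightarrow$ (i) will be immediate, while (i) $\Rightarrow$ (ii) carries the real content and uses Lemma \ref{lemma_dif} together with a short case analysis.

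For (ii) $\Rightarrow$ (i): assuming $\ba \neq \mathbf{0}$, choose any $k \in \setm$ with $a_k > 0$; then assertion (ii) specialized to the index $i = k$ says precisely that $\ba$ is a minimal element of $\{\bc \in \h \;|\; c_k = a_k\}$, which is (i).

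For (i) $\Rightarrow$ (ii): fix $k \in \setm$ with $a_k > 0$ for which $\ba$ is minimal in $T_k := \{\bc \in \h \;|\; c_k = a_k\}$, and fix any $i \in \setm$ with $a_i > 0$; the goal is to show $\ba$ is minimal in $T_i := \{\bc \in \h \;|\; c_i = a_i\}$. There is nothing to prove if $i = k$, so assume $i \neq k$ and, for contradiction, that there is $\bb \in \h$ with $\bb \ll \ba$, $\bb \neq \ba$ and $b_i = a_i$. I would split on the value of $b_k$ (note $b_k \leq a_k$ since $\bb \ll \ba$). If $b_k = a_k$, then $\bb$ already lies in $T_k$ and is strictly below $\ba$, contradicting the minimality of $\ba$ in $T_k$. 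If $b_k < a_k$, apply Lemma \ref{lemma_dif} to the distinct elements $\ba, \bb$ of $\h$ with its distinguished index taken to be our $i$ (legitimate because $a_i = b_i$): this yields $\bc \in \h$ with $c_\ell = \max\{a_\ell, b_\ell\} = a_\ell$ for each $\ell \neq i$ with $a_\ell \neq b_\ell$, with $c_\ell \leq a_\ell$ for each $\ell \neq i$ with $a_\ell = b_\ell$, and with $c_i < a_i$ (here using $a_i > 0$ to discard the alternative $c_i = a_i = 0$). Reading these coordinatewise gives $\bc \ll \ba$, and $c_i < a_i$ gives $\bc \neq \ba$, while $c_k = \max\{a_k, b_k\} = a_k$ since $k \neq i$ and $a_k \neq b_k$; thus $\bc \in T_k$ is strictly below $\ba$, again contradicting minimality. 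Either way we reach a contradiction, so $\ba$ is minimal in $T_i$.

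The only verifications that require attention are reading off $\bc \ll \ba$ from the three clauses of Lemma \ref{lemma_dif} and checking $\bc \neq \ba$, both routine. The genuine (if modest) obstacle is the case split on $b_k$: one must notice that the case $b_k = a_k$ needs no invocation of the lemma at all, and that in the case $b_k < a_k$ the strict inequality $a_k \neq b_k$ is exactly what forces $c_k = a_k$ in the lemma's conclusion (i). I do not anticipate any difficulty beyond this bookkeeping.
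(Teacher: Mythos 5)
Your proof is correct and follows essentially the same route as the paper: the paper also argues (i) $\Rightarrow$ (ii) by contradiction, notes that minimality in the $k$-th set forces $b_k < a_k$ (your first case, handled implicitly there), and then applies Lemma \ref{lemma_dif} with distinguished index $i$ to produce $\bc \ll \ba$ with $c_k = a_k$ and $c_i < a_i$. Your version is just slightly more explicit about the case split and about using $a_i > 0$ to rule out the alternative $c_i = a_i = 0$.
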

\begin{proof} Assume that $\ba$ is a minimal of the set $\{ \bc \in \h \; | \; c_k = a_k\}$ for some $k \in \setm$ and suppose that $\ba$ is not a minimal of the set $\{ \bc \in \h \; | \; c_j = a_j\}$ for some $j \in \setm$. Then there exists $\bb \in \h$ such that $\bb \ll \ba$, $\bb \neq \ba$ and $b_j = a_j$, furthermore, from the hypothesis we must have $b_k < a_k$. From Lemma \ref{lemma_dif} there exists $\bc \in \h$ such that $c_i \leq \max \{a_i, b_i\}$ for all $i \in \setm$, $c_k = a_k $ and $c_j < a_j$, so $\ba $ is not a minimal of the set $\{ \bc \in \h \; | \; c_k = a_k\}$, a contradiction.
\end{proof}

\begin{definition} If $\ba \in \h$ is a minimal element of the set $\{ \bc \in \h \; | \; c_k = a_k\}$ for some $k \in \setm$ we say that $\ba$ is {\em a minimal} of $\h$ (cf.\ \cite[Section 2]{gretchen}). We will  denote by $\Gamma$ the set of all minimals.
\end{definition}

Observe that $\mathbf{0}$ and the points of $\h$ which have all entries but one equal to zero are minimals.

\begin{theorem} The set $\h$ is a subsemigroup of $\no^m$.
\end{theorem}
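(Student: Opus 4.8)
The plan is to show that $\h=\br(\R\setminus\{0\})$ is closed under addition; it already contains $\mathbf{0}=\br(1)$ (the $\r_k$ are normalized) and lies inside $\no^m$, so this suffices. Fix $\ba,\bb\in\h$; if either equals $\mathbf{0}$ there is nothing to prove, so assume $\ba\neq\mathbf{0}\neq\bb$ and choose $f,g\in\R\setminus\{0\}$ with $\br(f)=\ba$ and $\br(g)=\bb$. Before the main argument I would record two easy facts. Since $\ba\neq\mathbf{0}$, $f\in\M_{\r_\ell}$ for some $\ell$; then (N3) applied to the pair $0,u$ with multiplier $f$ gives $\r_\ell(uf)>-\infty$, i.e.\ $uf\neq0$ for every nonzero $u$, so $f$ is a non-zero-divisor, and symmetrically so is $g$. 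Also a nonzero constant has $\br=\mathbf{0}$, so $\ba\neq\mathbf{0}$ and $\bb\neq\mathbf{0}$ force $f\notin\F$ and $g\notin\F$, hence $f+1\neq0$ and $g+1\neq0$.

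Next I would examine the candidate $h=fg$ coordinate by coordinate using (N5). If $a_k>0$ and $b_k>0$ then $f,g\in\M_{\r_k}$ and (N5) gives equality $\r_k(fg)=a_k+b_k$. If $a_k=b_k=0$ then $f,g$ lie in the subalgebra $\U_{\r_k}$, so $fg\in\U_{\r_k}$, and since $fg\neq0$ this forces $\r_k(fg)=0=a_k+b_k$. Thus $\r_k(fg)=a_k+b_k$ at every coordinate except possibly the \emph{bad} ones, those at which exactly one of $a_k,b_k$ is $0$; there (N5) still yields $\r_k(fg)\leq a_k+b_k$, but the value may be strictly smaller.

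The key step is to repair each bad coordinate in isolation, then glue with Proposition~\ref{prop_lub}. Suppose $i$ is bad with $b_i=0<a_i$ (the case $a_i=0<b_i$ is the mirror image, $f\leftrightarrow g$). Put $h_i:=f(g+1)=fg+f$, which is nonzero because $g+1\neq0$ and $f$ is a non-zero-divisor. As $i$ is bad, $\r_i(fg)<a_i+b_i=a_i=\r_i(f)$, and since the $\r_i$-value of a sum of two elements with distinct $\r_i$-values is the larger one, $\r_i(h_i)=a_i=a_i+b_i$; while for $j\neq i$, (N2) gives $\r_j(h_i)\leq\max\{\r_j(fg),\r_j(f)\}\leq\max\{a_j+b_j,a_j\}=a_j+b_j$ (using (N5) for the first term). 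In the mirror case take $h_i:=(f+1)g$ and likewise obtain $\r_i(h_i)=b_i=a_i+b_i$ and $\r_j(h_i)\leq a_j+b_j$ for all $j\neq i$. Now the finitely many tuples $\br(fg)$ and $\br(h_i)$ (one per bad $i$) all lie in $\h$ and are all $\ll\ba+\bb$; moreover $\br(fg)$ attains $a_k+b_k$ at every good coordinate $k$, and $\br(h_i)$ attains $a_i+b_i$ at the bad coordinate $i$. Hence the least upper bound of this family is exactly $\ba+\bb$, so by Proposition~\ref{prop_lub} we get $\ba+\bb\in\h$, which completes the proof.

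I expect the only real obstacle to be resisting the natural impulse to produce a single witness for $\ba+\bb$: a product $(f+\alpha)(g+\beta)$ does not work in general, since a uniform shift can lower the $\r_k$-value at coordinates that $fg$ had handled correctly. The resolution is that Proposition~\ref{prop_lub} makes a single witness unnecessary — one needs only a finite family of elements of $\h$, all $\ll\ba+\bb$, whose coordinatewise maximum is $\ba+\bb$ — and each bad coordinate can then be cured cheaply by a one-term perturbation of $f$ or of $g$, which raises precisely the targeted coordinate to its value while possibly only lowering (never raising) the value at every other coordinate relative to the bound $a_j+b_j$ already in force. Once this is seen, the whole argument is just the two short estimates for $h_i$ above, each immediate from (N2), (N5), and the subalgebra property of $\U_{\r_k}$.
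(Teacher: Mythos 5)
Your argument is correct and rests on the same two pillars as the paper's: the product $fg$ handles every coordinate where both $a_k>0$ and $b_k>0$ (via the equality case of (N5)), and Proposition~\ref{prop_lub} is then invoked to assemble $\ba+\bb$ as a least upper bound of finitely many elements of $\h$, each $\ll\ba+\bb$. Where you differ is in how you cover the coordinates at which exactly one of $a_i,b_i$ vanishes: you manufacture a new witness $h_i=f(g+1)$ or $(f+1)g$ for each such coordinate. The paper's proof is a one-liner that makes this unnecessary: it simply throws $\ba$ and $\bb$ themselves into the lub family and observes $\ba+\bb=\lub(\ba,\bb,\br(fg))$ --- at a coordinate with $b_i=0$ one has $a_i+b_i=a_i$, which is already attained by $\ba=\br(f)$, and symmetrically. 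So your perturbations buy nothing that $f$ and $g$ do not already provide; your closing remark that ``a single witness is unnecessary'' is exactly the right insight, but the cheapest family is $\{f,g,fg\}$. One small imprecision: being a ``bad'' coordinate does not force $\r_i(fg)<a_i+b_i$ strictly, and your computation of $\r_i(h_i)$ via Lemma~1.2(ii) needs that strict inequality; in the equality case the repair is simply not needed (the coordinate is already covered by $\br(fg)$), so this is harmless but should be said.
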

\begin{proof} Let $\ba, \bb \in \h$ and let $f, g \in \R$ be such that $\br(f) = \ba$ and $\br(g) = \bb$. Set $\bc := \br(f g)$, for $i \in \setm$ we have $c_i \leq a_i + b_i$ and equality holds whenever $a_i > 0$ and $b_i > 0$, hence $\ba + \bb = \lub(\ba, \bb, \bc)$.
\end{proof}

We assume from now on that $\{\r_1, \ldots, \r_m\}$ is a complete set of \nw s for $\R$; the next result shows that  
 $\Gamma$ generates the semigroup $\h$ under the operation $\lub$.

\begin{lemma} \label{lemma_lub} Let $\ba \in \h$ and let $r$ be the number of nonzero entries of $\ba$, then there exist $\ba_1, \ldots , \ba_r \in \Gamma$ such that $\ba = \lub(\ba_1, \ldots, \ba_r)$. 
\end{lemma}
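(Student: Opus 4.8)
The plan is to induct on $r$, the number of nonzero entries of $\ba$. If $r = 0$ then $\ba = \mathbf{0}$, which is itself a minimal, so there is nothing to prove; and if $\ba \in \Gamma$ we are done as well. So suppose $\ba \in \h \setminus \Gamma$ has $r \geq 1$ nonzero entries. By Proposition \ref{minimal}, since $\ba$ is not a minimal, for \emph{every} $k \in \setm$ with $a_k > 0$ the $m$-tuple $\ba$ fails to be minimal in $\{\bc \in \h \mid c_k = a_k\}$. Fix such a $k$. Then there is $\bb \in \h$ with $\bb \ll \ba$, $\bb \neq \ba$, and $b_k = a_k$. Applying Lemma \ref{lemma_dif} with $j = k$ (note $a_k = b_k$), we obtain $\bc \in \h$ with $c_i = \max\{a_i, b_i\} = a_i$ for $i \neq k$ with $a_i \neq b_i$, with $c_i \leq a_i$ for $i \neq k$ with $a_i = b_i$, and with $c_k < a_k$ (since $a_k = b_k > 0$, the alternative $c_k = a_k = 0$ cannot occur). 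In particular $c_i \leq a_i$ for all $i$, so $\bc \ll \ba$.

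Next I would check that $\lub(\bb, \bc) = \ba$. For coordinates $i \neq k$: if $a_i = b_i$ then $\max\{b_i, c_i\} = a_i$ since $c_i \leq a_i = b_i$; if $a_i \neq b_i$ then $c_i = a_i$ (part (i) of Lemma \ref{lemma_dif}), so again $\max\{b_i, c_i\} = a_i$. For $i = k$: $b_k = a_k$, so $\max\{b_k, c_k\} = a_k$. Hence $\lub(\bb, \bc) = \ba$. Now both $\bb$ and $\bc$ lie strictly below $\ba$ in the ordering $\ll$: $\bb \neq \ba$ and $\bc_k < a_k$, and their entries are dominated by those of $\ba$. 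The key point is that $\bb$ has at most $r$ nonzero entries (all supported where $\ba$ is), and $\bc$ has at most $r$ nonzero entries as well, but I need the induction to actually terminate — so I would set up the induction not merely on the number of nonzero entries but on the pair (number of nonzero entries, then the sum of entries), ordered lexicographically, or equivalently induct on $|\ba| := \sum_i a_i$ among elements of $\h$ with at most $r$ nonzero entries. Since $\bb \ll \ba$, $\bb \neq \ba$ gives $|\bb| < |\ba|$, and $\bc \ll \ba$ with $c_k < a_k$ gives $|\bc| < |\ba|$; moreover neither $\bb$ nor $\bc$ has more than $r$ nonzero entries. By the inductive hypothesis, write $\bb = \lub(\bb_1, \ldots, \bb_s)$ and $\bc = \lub(\bc_1, \ldots, \bc_t)$ with all $\bb_p, \bc_q \in \Gamma$ and $s, t \leq r$. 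Then $\ba = \lub(\bb_1, \ldots, \bb_s, \bc_1, \ldots, \bc_t)$.

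The remaining issue is the bookkeeping on the number of minimals: the naive combination produces $s + t$ minimals, possibly more than $r$. Here I would argue that the support of each $\bb_p$ and each $\bc_q$ is contained in the support of $\ba$, which has size $r$; and for each coordinate $i$ in the support of $\ba$, the value $a_i$ must be \emph{attained} by at least one of the listed minimals (since $\ba$ is their $\lub$). One can therefore select, for each of the $r$ nonzero coordinates, one minimal from the list attaining the correct value there, discard the rest, and check — using the defining property that a minimal $\bd$ of $\{\bc \in \h \mid c_i = a_i\}$ satisfies $d_\ell \leq a_\ell$ in every coordinate once $d_i = a_i$ — that the selected $r$ (or fewer) minimals still have $\lub$ equal to $\ba$; pad with $\mathbf{0}$ or repeat a minimal if strictly fewer than $r$ are needed. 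The main obstacle is precisely this pruning argument: making sure that after selecting one minimal per active coordinate the $\lub$ is not accidentally too small in some coordinate, which requires invoking that each chosen minimal $\bd$ with $d_i = a_i$ is $\ll \ba$ (this follows from Proposition \ref{minimal}, since $\bd$ being a minimal that agrees with $\ba$ at $i$ forces $\bd$ minimal at every active coordinate, and any element of $\h$ lying below $\bd$ and agreeing with it somewhere would contradict minimality — more directly, each $\bd_p$ arising in the decompositions of $\bb,\bc$ satisfies $\bd_p \ll \bb \ll \ba$ or $\bd_p \ll \bc \ll \ba$, so domination by $\ba$ is automatic). With that in hand the $\lub$ of the pruned list equals $\ba$ coordinatewise, completing the induction.
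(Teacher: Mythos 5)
Your proof is correct, but it takes a genuinely different route from the paper's. The paper argues per coordinate: for each $i$ in the support $\Lambda$ of $\ba$, the set $\{\bc \in \h \mid \bc \ll \ba,\ c_i = a_i\}$ is a nonempty subset of the finite box below $\ba$, hence has a $\ll$-minimal element $\bb_i$; any such element is automatically a minimal of $\h$ (anything strictly below it with the same $i$-th entry would again lie in that set), it satisfies $\bb_i \ll \ba$ with $i$-th entry $a_i$, and so $\ba = \lub(\bb_i \mid i \in \Lambda)$ with exactly $r$ minimals and no pruning needed. You instead run a global induction on $\sum_i a_i$, using Lemma \ref{lemma_dif} to split $\ba = \lub(\bb,\bc)$ into two strictly smaller elements of $\h$, then combine the two inductive decompositions and prune the resulting list down to one minimal per active coordinate. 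Both arguments are sound; yours costs the extra pruning step but makes the termination of the descent fully explicit, whereas the paper's jump from ``$\ba$ is not a minimal'' to ``there exists a minimal $\bb_i \ll \ba$ with $b_{ii}=a_i$'' silently relies on the same well-foundedness. One small caution: your parenthetical claim that a minimal of $\{\bc \in \h \mid c_i = a_i\}$ agreeing with $\ba$ at $i$ is automatically dominated coordinatewise by $\ba$ is false in general (such a minimal need not lie below $\ba$). Fortunately you do not need it: the direct justification you also give --- each minimal $\bb_p$ or $\bc_q$ in your list satisfies $\bb_p \ll \bb \ll \ba$ or $\bc_q \ll \bc \ll \ba$ because $\lub$ is the coordinatewise maximum --- already guarantees that the pruned $\lub$ equals $\ba$, so the proof stands.
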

\begin{proof} Let $\ba \in \h \setminus \Gamma$ and let $\Lambda \subset \setm$ be the set of indexes $i$ for which $\ba_i > 0$; from Proposition \ref{minimal}  $\ba$ is not a minimal in any set $\{ \bb \in \h \; | \; b_i = a_i \}$ with $i \in \Lambda$, then for all $i \in \Lambda$ there exists $\bb_i \in \Gamma$ such that $\bb_i \ll \ba$ and  $b_{i i} = a_i$, so  we have $\ba = \lub(\bb_i;\;  i \in \Lambda)$.
\end{proof}

Given $j \in \setm$ let $H_j := \{ a \in \no \; | \; \exists  f \in \cap_{i = 1; \;   i \neq j}^m \U_{\r_i} \; \mbox{  such that  } \r_j (f) = a \}$ (i.e.\ $a \in H_j$ if and only if there exists $\ba \in \h$ having all entries equal to zero, except the $j$-th entry, which is equal to $a$). Then $H_j$  is a semigroup which has finite genus (since $\{\r_1, \ldots, \r_m\}$ is a complete set of \nw s). 

\begin{lemma} Let $\ba \in \Gamma$ and let $\Lambda = \{ j \; | \; a_j > 0\} \subset \setm$. If $\# \Lambda \geq 2$ then $a_j \notin H_j$ for all $j \in \Lambda$.
\end{lemma}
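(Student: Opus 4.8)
The plan is to reduce the whole statement to Proposition \ref{minimal} together with the elementary translation of membership in $H_j$ into a statement about $\h$. First I would record the key reformulation, which is exactly the parenthetical remark made just before the lemma: since we have normalized the near weights, $\cap_{i\neq j}\U_{\r_i}$ consists of $0$ together with the $f\in\R$ having $\r_i(f)=0$ for all $i\neq j$, so for $f$ in this intersection one has $\br(f)=\r_j(f)\be_j$. Consequently $a\in H_j$ if and only if the $m$-tuple $a\be_j$ lies in $\h$. No real work is needed for this step; it is just unwinding the definitions of $H_j$, of $\U_{\r_i}$, and of normalization.

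Next I would check that Proposition \ref{minimal} is applicable to $\ba$. The hypothesis $\#\Lambda\geq 2$ forces $\ba\neq\mathbf{0}$, and I claim the index $k$ witnessing $\ba\in\Gamma$ can then be taken with $a_k>0$: indeed, if $a_k=0$ the set $\{\bc\in\h \; | \; c_k=a_k\}=\{\bc\in\h \; | \; c_k=0\}$ contains $\mathbf{0}$, which is $\ll$ every element, so $\mathbf{0}$ is its unique minimal, contradicting $\ba\neq\mathbf{0}$. Hence $\ba$ is a minimal of $\{\bc\in\h \; | \; c_k=a_k\}$ for some $k$ with $a_k>0$, and Proposition \ref{minimal} gives that $\ba$ is a minimal of $\{\bc\in\h \; | \; c_i=a_i\}$ for \emph{every} $i\in\Lambda$.

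Finally I would argue by contradiction. Fix $j\in\Lambda$ and suppose $a_j\in H_j$. By the reformulation, $a_j\be_j\in\h$. Its $j$-th coordinate equals $a_j$, so $a_j\be_j$ belongs to $\{\bc\in\h \; | \; c_j=a_j\}$; moreover $a_j\be_j\ll\ba$, since the two tuples agree in coordinate $j$ and the remaining coordinates of $a_j\be_j$ vanish; and $a_j\be_j\neq\ba$, because $\#\Lambda\geq 2$ provides some $i\in\Lambda$ with $i\neq j$ and $a_i>0$, whereas the $i$-th coordinate of $a_j\be_j$ is $0$. This contradicts the minimality of $\ba$ in $\{\bc\in\h \; | \; c_j=a_j\}$ established above. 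Therefore $a_j\notin H_j$, and since $j\in\Lambda$ was arbitrary the lemma follows.

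There is essentially no hard step here: the only point requiring a moment's care is the observation that the minimality index may be chosen with $a_k>0$, so that Proposition \ref{minimal} is legitimately invoked; everything else is a direct consequence of the definitions of $H_j$, of $\Gamma$, and of the partial order $\ll$.
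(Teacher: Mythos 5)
Your proof is correct and follows essentially the same route as the paper's: translate $a_j\in H_j$ into $a_j\be_j\in\h$, observe $a_j\be_j\ll\ba$ with $a_j\be_j\neq\ba$, and contradict minimality. The only difference is that you explicitly justify, via Proposition \ref{minimal} and the observation that the witnessing index $k$ can be taken with $a_k>0$, why this violates membership in $\Gamma$ — a detail the paper's one-line proof leaves implicit.
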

\begin{proof} Let $j \in \Lambda$ and assume by means of absurd that $a_j \in H_j$; let $\bb \in \no^m$ be the $m$-tuple having all entries equal to zero except the $j$-th, which is equal to $a_j$. Then $\bb \in \h$, $\bb \ll \ba$ and $\bb \neq \ba$, hence $\ba \notin \Gamma$.
\end{proof}

Let $\tilde{\Gamma} := \{ \ba \in \Gamma \; | \; \ba \mbox{ has at least two nonzero entries}\}$, an easy but important consequence of the above lemma is the following.

\begin{corollary} The set $\tilde{\Gamma}$ is finite.
\end{corollary}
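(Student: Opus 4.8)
The plan is to show that every coordinate of an element of $\tilde{\Gamma}$ is bounded by a constant depending only on the numerical semigroups $H_1, \ldots, H_m$, so that $\tilde{\Gamma}$ is contained in a finite box in $\no^m$. The preceding lemma supplies exactly what is needed: given $\ba \in \tilde{\Gamma}$, the index set $\Lambda = \{\, j \; | \; a_j > 0\,\}$ satisfies $\# \Lambda \geq 2$, and hence $a_j \notin H_j$ for every $j \in \Lambda$.

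First I would recall that, because $\{\r_1, \ldots, \r_m\}$ is a complete set of \nw s for $\R$, each $H_j$ is a numerical semigroup; in particular $\no \setminus H_j$ is finite, and I write $c_j$ for the conductor of $H_j$, so that $a \in H_j$ whenever $a \geq c_j$. Now let $\ba \in \tilde{\Gamma}$ and $\Lambda$ be as above. For $j \in \Lambda$ the lemma gives $a_j \notin H_j$, hence $a_j < c_j$; for $j \notin \Lambda$ we have $a_j = 0 < c_j$ trivially. Therefore $\ba$ lies in the finite set $\{0, 1, \ldots, c_1 - 1\} \times \cdots \times \{0, 1, \ldots, c_m - 1\}$, and it follows at once that $\tilde{\Gamma}$ is finite.

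I do not anticipate any serious obstacle: the statement is essentially a repackaging of the preceding lemma together with the finiteness of each $\no \setminus H_j$. The only point deserving a word of care is the assertion that the $H_j$ are genuine numerical semigroups (equivalently, have finite complement in $\no$), which is precisely the content of the completeness hypothesis recorded in the paragraph that introduces $H_j$; once that is granted, the conductors $c_j$ exist and the box argument closes the proof.
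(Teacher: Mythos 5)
Your proof is correct and follows essentially the same route as the paper: the preceding lemma forces every nonzero coordinate of an element of $\tilde{\Gamma}$ to be a gap of the corresponding $H_j$, hence bounded by the conductor, so $\tilde{\Gamma}$ sits inside a finite box. If anything you are slightly more careful than the paper (which writes $\tilde{\Gamma} \subset G_1 \times \cdots \times G_m$ and glosses over the coordinates equal to $0$, which are not gaps); your explicit treatment of the indices outside $\Lambda$ fixes that small imprecision.
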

\begin{proof} Let $G_j$ be the set of gaps of $H_j$, then  $\# (G_j) < \infty$ for all $j \in \setm$ and from the lemma above $\tilde{\Gamma} \subset G_1 \times \cdots \times G_m$.
\end{proof}

For each $\ba \in \Gamma$ let $f_\ba \in \R$ be such that $\br(f_\ba) = \ba$, and let $\B := \{ f_\ba \in \R; | \; \ba \in \G\}$.

\begin{proposition} \label{prop_span} The set $\B$ spans $\R$ as an $\F$-vector space.
\end{proposition}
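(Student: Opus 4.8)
The plan is to show that the $\F$-subspace $V:=\mathrm{span}_{\F}(\B)$ of $\R$ is all of $\R$, by induction on the size of $\br(f)$, the key inputs being Lemma \ref{lemma_lub} and Proposition \ref{prop_lub}. Two facts get the induction started. First, since $\mathbf 0\in\G$ and $\br(f_{\mathbf 0})=\mathbf 0$ forces $f_{\mathbf 0}\in\bigl(\bigcap_{i=1}^m\U_{\r_i}\bigr)\setminus\{0\}=\F^*$ (using that $\{\r_1,\ldots,\r_m\}$ is complete), we have $\F=\F\,f_{\mathbf 0}\subseteq V$. Second --- this is the real input --- for every $\ba\in\h$ there is some $g\in V$ with $\br(g)=\ba$: if $\ba\in\G$ take $g=f_{\ba}\in\B$; if $\ba\in\h\setminus\G$, use Lemma \ref{lemma_lub} to write $\ba=\lub(\ba_1,\ldots,\ba_r)$ with $\ba_1,\ldots,\ba_r\in\G$, and apply the second assertion of Proposition \ref{prop_lub} to $f_{\ba_1},\ldots,f_{\ba_r}\in\B$ to get $g=\sum_{i=1}^r\l_i f_{\ba_i}\in V$ with $\br(g)=\lub(\ba_1,\ldots,\ba_r)=\ba$.

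Now I would argue as follows. Let $f\in\R$; if $f=0$ there is nothing to prove, so assume $f\neq 0$ and set $\ba:=\br(f)\in\h$, inducting on the nonnegative integer $|\ba|:=a_1+\cdots+a_m$. If $|\ba|=0$ then $f\in\bigcap_{i=1}^m\U_{\r_i}=\F\subseteq V$. If $|\ba|>0$, choose $j$ with $a_j>0$ and, by the second fact above, pick $g\in V$ with $\br(g)=\ba$. Then $\r_j(f)=\r_j(g)=a_j>0$, so $f,g\in\M_{\r_j}$, and (N4) gives $\l\in\F^*$ with $\r_j(f-\l g)<a_j$. If $f-\l g=0$ then $f=\l g\in V$; otherwise (N1) and (N2) give $\r_i(f-\l g)\le\max\{\r_i(f),\r_i(g)\}=a_i$ for all $i$, hence $|\br(f-\l g)|\le|\ba|-1<|\ba|$, so by the induction hypothesis $f-\l g\in V$, and therefore $f=(f-\l g)+\l g\in V$.

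The point to flag is the choice to reduce only one coordinate $\r_j$ per step rather than all $m$ at once: the scalars supplied by (N4) for distinct coordinates need not agree, so no single multiple of $g$ can lower $\br(f)$ in every coordinate simultaneously; but lowering just the $j$-th coordinate while keeping the others no larger already makes $|\br(\cdot)|$ strictly decrease, which is what terminates the induction, and it is Lemma \ref{lemma_lub} together with Proposition \ref{prop_lub} that guarantees a usable $g\in V$ with $\br(g)=\br(f)$ at each stage. Everything else --- the inequalities on the $\r_i$, the memberships in $\M_{\r_j}$, and the estimate on $|\br(f-\l g)|$ --- is routine.
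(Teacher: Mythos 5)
Your proof is correct and takes essentially the same route as the paper's: both use Lemma \ref{lemma_lub} together with Proposition \ref{prop_lub} to produce $g$ in the span of $\B$ with $\br(g)=\br(f)$, and then subtract a scalar multiple of $g$ via (N4) to strictly reduce. The only difference is bookkeeping --- the paper inducts on the number of nonzero entries of $\br(f)$ with an inner loop driving one entry to zero, while you induct on the single quantity $a_1+\cdots+a_m$, which gives a slightly cleaner termination argument.
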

\begin{proof}
We want to show that any $f \in \R\setminus\{0\}$ is a finite linear combination over $\F$ of elements of $\B$, and we do this by induction on the number of nonzero entries of $\ba:= \br(f)$. If this number is zero then $f \in \F^*$ and is a multiple of $f_{\mathbf{0}} \in \F^*$. Assume that $\ba$ has  $r$ nonzero entries, with $r \geq 1$, and for simplicity, let's assume that these entries are $a_1, \ldots,  a_r$. From Lemma \ref{lemma_lub} there are $\ba_1, \ldots, \ba_r \in \G$ such that $\ba = \lub(\ba_1, \ldots, \ba_r)$ and from Proposition \ref{prop_lub} there are $\l_1, \ldots, \l_r \in \F$ such that $g := \sum_{i = 1}^r \l_i f_{\ba_i}$ satisfies $\br(g) = \br(f)$. Since $\r_1(f) = \r_1(g) = a_1 > 0$ there is $\l \in \F^*$ such that $\r_1(f - \l g) < a_1$, moreover $\r_j(f - \l g) \leq a_j$ for all $j \in \{2, \ldots, m\}$. If $f = \l g$ we are done, otherwise we repeat the process, starting with $f  - \l g$ this time, until we get either that $f$ is a linear combination of finite elements of $\B$ or that the $m$-tuple obtained by applying the function $\br$ to  $f$ minus a finite linear combination of elements of $\B$ has less than $r$ nonzero elements (because the first entry is certainly zero); either way we're done.
\end{proof}

\begin{proposition} \label{fin_gen_alg} $\R$ is a finitely generated algebra over $\F$.
\end{proposition}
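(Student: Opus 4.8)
The plan is to exhibit an explicit finite generating set, leaning on the fact that $\B$ already spans $\R$ over $\F$ (Proposition~\ref{prop_span}). Since every numerical semigroup is finitely generated, for each $j \in \setm$ I would fix positive integers $a_{j,1}, \ldots, a_{j,t_j}$ generating $H_j$ and choose $f_{j,1}, \ldots, f_{j,t_j} \in \cap_{i \neq j} \cU_{\r_i}$ with $\r_j(f_{j,k}) = a_{j,k}$. Together with the representatives $f_\ba$ for $\ba \in \tilde\Gamma$ (a finite set, by the corollary above), these form a finite subset of $\R$; let $A$ be the $\F$-subalgebra of $\R$ that they generate. The claim is that $A = \R$, which proves the proposition.

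Since $\B = \{f_\ba : \ba \in \G\}$ spans $\R$ over $\F$, it suffices to show $f_\ba \in A$ for every $\ba \in \G$. Recalling that $\mathbf 0$ and the elements of $\h$ with exactly one nonzero entry are minimals, and that every minimal has zero, one, or at least two nonzero entries, we get $\G = \{\mathbf 0\} \cup \tilde\Gamma \cup \{\, a\be_j : j \in \setm,\ a \in H_j \setminus \{0\}\,\}$. For $\ba = \mathbf 0$ we have $f_{\mathbf 0} \in \F \subseteq A$, and for $\ba \in \tilde\Gamma$ we have $f_\ba \in A$ by construction. The remaining case $\ba = a\be_j$ reduces to the following statement, which I would prove by strong induction on $a \in \no$: if $g \in \R$ satisfies $\r_i(g) = 0$ for all $i \neq j$ and $\r_j(g) \leq a$, then $g \in A$.

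For the base case $a = 0$, any such nonzero $g$ lies in $\cap_{i=1}^m \cU_{\r_i} = \F \subseteq A$. For $a \geq 1$: if $\r_j(g) < a$ the induction hypothesis applies, so assume $\r_j(g) = a$; then $a \in H_j$, so write $a = \sum_k c_k a_{j,k}$ with $c_k \in \no$ and set $h := \prod_k f_{j,k}^{c_k} \in A$. Each $f_{j,k}$ lies in $\M_{\r_j}$ (as $a_{j,k} > 0$), so the equality part of (N5) gives $\r_j(h) = \sum_k c_k a_{j,k} = a$; and since each $\cU_{\r_i}$ is a subalgebra, $h \in \cap_{i \neq j}\cU_{\r_i}$, i.e.\ $\r_i(h) = 0$ for $i \neq j$. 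Now $g, h \in \M_{\r_j}$ with $\r_j(g) = \r_j(h)$, so by (N4) there is $\l \in \F^*$ with $\r_j(g - \l h) < a$, while (N1)--(N2) give $\r_i(g - \l h) = 0$ for $i \neq j$; if $g = \l h$ we are done, and otherwise the induction hypothesis gives $g - \l h \in A$, whence $g = \l h + (g - \l h) \in A$. Applying this with $g = f_{a\be_j}$ finishes the verification, so $\B \subseteq A$ and $A = \R$.

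The bookkeeping is routine; the only delicate step is the descent, where it is essential that the chosen generators $a_{j,k}$ of $H_j$ be strictly positive (so that $h$ is a genuine product of elements of $\M_{\r_j}$ and the equality case of (N5) applies) and that the $\cU_{\r_i}$ be subalgebras (so that forming $h$ leaves the coordinates $i \neq j$ untouched). Beyond getting these two points right I do not anticipate a serious obstacle.
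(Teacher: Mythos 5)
Your proof is correct and follows essentially the same route as the paper: the generating set is the (finite) collection of representatives of $\tilde{\Gamma}$ together with functions realizing the finitely many generators of each numerical semigroup $H_j$, and Proposition~\ref{prop_span} does the rest. The only difference is cosmetic: the paper simply \emph{chooses} the representative $f_{a\be_j}$ to be the monomial $\prod_k f_{j,k}^{c_k}$, whereas you keep an arbitrary representative and recover it inside the subalgebra by an (N4)-descent, which is slightly longer but equally valid.
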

\begin{proof} Let $i \in \{1, \ldots, m\}$, we know that the semigroup $H_i \subset \no$ has finite genus, hence it is finitely generated, so let $H_i = \langle a_{i 1}, \ldots, a_{i r_i} \rangle$. For each $a_{i j}$ with $i \in \setm$ and $j \in \{1, \ldots, r_i\}$ there is $\ba_{i j} \in \G$ having all entries equal to zero, except the $i$-th entry which is equal to $a_{i j}$. Thus if $\ba \in \G \setminus \tilde{\G}$, i.e.\ if $\ba$ has only one positive entry, which is in the $i$-th position for some $i \in \setm$, then for certain $\alpha_1, \ldots, \alpha_{r_i} \in \no$ we have $\br(f^{\alpha_1}_{\ba_{i 1}} \cdot \ldots \cdot f^{\alpha_{r_i}}_{\ba_{i r_i}}) = \ba$ (recall that $f_{\ba_{i j}} \in \B$ and are such that $\br(f_{\ba_{i j}}) = \ba_{i j}$ for all $j \in \{1, \ldots, r_i\}$) and we can take $f_{\ba} := f^{\alpha_1}_{\ba_{i 1}} \cdot \ldots \cdot f^{\alpha_{r_i}}_{\ba_{i r_i}}$. Since $\tilde{\G}$ is a finite set, the result follows from the above proposition.
\end{proof}

\begin{theorem} \label{dim_finita} Let $f \in \R$, $f \neq 0$, then $\dim_{\F} \R/(f) < \infty$.
\end{theorem}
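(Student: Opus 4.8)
The plan is to reduce to the one-variable situation. Fix $f \in \R$, $f \neq 0$, and set $\ba := \br(f) = (\r_1(f), \ldots, \r_m(f)) \in \h$. The key point is that membership of $g$ in the ideal $(f)$ is controlled by the values $\r_k(g)$ together with the structure of the numerical semigroups $H_k$; more precisely, if $g \in \R$ satisfies $\r_k(g) \geq \r_k(f)$ for all $k$ and the ``difference'' values $\r_k(g) - \r_k(f)$ lie in the appropriate semigroups, then one can subtract off a suitable $\F$-multiple of a product $f \cdot h$ (with $h \in \cap_{i\neq k}\U_{\r_i}$, which exists by completeness) to strictly decrease one of the values. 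Iterating this, exactly as in the proof of Proposition \ref{prop_span}, shows that $\R/(f)$ is spanned over $\F$ by the images of finitely many ``basic'' elements, so it remains to bound that number.

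Concretely, I would first observe, using (N5) and the definition of $\h$ being a semigroup (Theorem), that if $g = f \cdot h$ with $h \in \M_{\r_k}$ then $\r_k(g) = \r_k(f) + \r_k(h)$, while for the other coordinates $\r_i(g) \le \r_i(f) + \r_i(h)$. Using the completeness hypothesis, for each $k$ and each $a \in \S_k$ (the numerical semigroup $\rho_k(\cap_{i \neq k}\U_{\r_i})$) there is an $h$ with $\r_k(h) = a$ and $\r_i(h) = 0$ for $i \neq k$; multiplying $f$ by such $h$ produces an element of $(f)$ whose $k$-th $\br$-value is $\r_k(f) + a$ and whose other values are $\le \r_i(f)$. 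Since each $\S_k$ is a numerical semigroup, it contains every integer $\geq c_k$ (its conductor). Combining this over all coordinates — taking $h$ a product of such single-coordinate elements — one gets: for every $\bb \in \h$ with $b_k \geq \r_k(f) + c_k$ for all $k$, there is an element of $(f)$ with $\br$-value having $k$-th entry between $\r_k(f)$ and $b_k$... and by the $\lub$/subtraction argument this is enough to kill $\bb$. I would then run the induction-on-number-of-nonzero-entries argument of Proposition \ref{prop_span}, but now working modulo $(f)$: the conclusion is that $\R/(f)$ is spanned over $\F$ by (images of) elements whose $\br$-values lie in the finite box $\{\bb \in \h : b_k < \r_k(f) + c_k \text{ for all } k \text{ with } b_k > 0\}$ together with the contribution from $\tilde{\G}$ and the finitely many gaps of each $H_k$; since $\Gamma$ decomposes via Lemma \ref{lemma_lub} and $\tilde\Gamma$ is finite, this box meets $\h$ in a finite set.

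The main obstacle I anticipate is bookkeeping the interaction between coordinates: when I multiply $f$ by $h$ to push up the $k$-th coordinate, I must make sure the other coordinates do not also grow, and when I subtract to decrease a coordinate I must check (via (N1), (N2)) that no coordinate goes back up — exactly the delicate point already handled in Lemma \ref{lemma_dim} and Proposition \ref{prop_span}. The cleanest route is probably not to track $(f)$ directly but to prove the equivalent statement that only finitely many cosets $g + (f)$ are ``new'', by showing that the set of $\br$-values realized by a set of coset representatives is contained in $\{\bb \in \h : b_k \le \r_k(f) + c_k\ \forall k\}$, which is finite because $\h \cap \prod_k [0, \r_k(f)+c_k]$ is finite; the span statement then follows from Proposition \ref{prop_span} applied relative to this bound. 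I would present the write-up in that order: (1) single-coordinate multipliers from completeness; (2) for $\bb$ with all positive coordinates $\ge \r_k(f)+c_k$, exhibit an element of $(f)$ matching $\bb$ coordinatewise up to strict decrease in one slot; (3) the descent/induction argument giving finite spanning set; (4) conclude finiteness of $\dim_\F \R/(f)$.
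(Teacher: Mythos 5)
Your overall strategy --- use completeness to manufacture elements of $(f)$ with prescribed $\br$-values and then rerun the descent of Proposition \ref{prop_span} modulo $(f)$ so as to trap coset representatives in a finite box --- is the same one the paper follows, but there is a genuine gap at the step you lean on most. You assert that if $h \in \M_{\r_k}$ then $\r_k(f h) = \r_k(f) + \r_k(h)$, so that multiplying $f$ by a single-coordinate element $h$ with $\r_k(h) = a$ yields an element of $(f)$ whose $k$-th value is exactly $\r_k(f) + a$ and whose other values are $\leq \r_i(f)$. Property (N5) gives that equality only when \emph{both} factors lie in $\M_{\r_k}$; if $f \in \U_{\r_k}$ (i.e.\ $\r_k(f) = 0$), one only gets $\r_k(f h) \leq \r_k(h)$, and the inequality can be strict (geometrically, $f$ may vanish at the $k$-th place at infinity). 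Then $f h$ misses the target value $b_k$, the hypothesis of (N4) fails, and the subtraction that is supposed to strictly decrease the $k$-th coordinate of a coset representative cannot be performed. A second, related problem: to reduce a representative whose $k$-th coordinate is large while some other coordinate $b_i$ is small or zero, you need an element $s \in (f)$ with $\r_i(s) \leq b_i$; your multipliers only give $\r_i(f h) \leq \r_i(f)$, which may well exceed $b_i$, so the coordinates you are not reducing are not under control either. (The obstacle you do flag --- coordinates \emph{growing} under multiplication or subtraction --- is in fact harmless by (N5) and (N2); the real difficulty is the failure of equality in (N5).)

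The paper's proof is organized around exactly this point. Before multiplying up, it first manufactures from $f$ an element $t f \in (f)$ with $\r_i(t f) = 0$ for all $i \neq k$ and $t f \in \M_{\r_k}$ (the latter because $\r_k(tf)=0$ would force $t f \in \cap_i \U_{\r_i} = \F$ and hence $f \in \F$). This is done by iterating the operation $f \mapsto f(g - \l)$, where $g$ is a single-coordinate element with $\r_k(g) > 0$ and $\r_i(g) = 0$ otherwise, and $\l$ is chosen via (N4) so that $\r_j(f(g - \l)) < \r_j(f)$ for a chosen $j \neq k$; the facts that $g - \l$ remains in $\M_{\r_k}$ and in $\U_{\r_i}$ for $i \neq k$ (cf.\ Lemma \ref{1st_lemma} and the subalgebra property of $\U_{\r_i}$) keep the construction inside $(f)$ and prevent the other coordinates from increasing. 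Only after all coordinates $i \neq k$ have been driven to $0$ does (N5) apply with equality, producing elements of $(f)$ whose $k$-th value is any sufficiently large integer and whose other values vanish; these replace all but finitely many of the spanning elements $f_{\ba}$, $\ba \in \G \setminus \tilde{\G}$, from Proposition \ref{prop_span}. Without this preliminary clearing step your descent does not close up, so you need to add it (or an equivalent device) for the argument to be complete.
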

\begin{proof} We may assume that $f \in \R \setminus \F$. We also assume $m \geq 2$ (for $m = 1$ the proof is in \cite{matsumoto}). From Proposition \ref{prop_span} we have that the set $\overline{\B} := \{ \overline{f_\ba} \in \R/(f) \; | \; \ba \in \G\}$ spans $\R/(f)$ as a vector space over $\F$, and since $\tilde{\G}$ is finite, it suffices to show that for all $\ba \in \G \setminus \tilde{\G}$, except maybe for a finite number, we may take $f_\ba \in (f)$. Thus, we will show that for any $i \in \setm$ there exists $n_i \in \no$ such that for all $n \geq n_i$ with $n \in H_i$ we may find  $s \in (f)$ such that $\r_i(s) = n$ and $\r_j(s) = 0$ for all $j \in \setm$, $j \neq i$. 
For simplicity, let's take $i = 1$; we will consider two cases. In the first case, we assume that $\r_j(f) = 0$ for all $j = 2,\ldots, m$, hence $\r_1(f) > 0$ (since $f \notin \F$). Let $\ell_1$ be the largest gap of $H_1$ and set $d_1 := \r_1(f)$, if $a_{1 1}, \ldots, a_{1 r_1}$ are generators for $H_1$, then using the notation of the preceding proof, for any $n > \ell_1 + d_1$ we may find $\alpha_1, \ldots, \alpha_{r_1} \in \no$ such that $\r_1(f^{\alpha_1}_{\ba_{1 1}} \cdot \ldots \cdot f^{\alpha_{r_1}}_{\ba_{1 r_1}} f) = n$ and $\r_j(f^{\alpha_1}_{\ba_{1 1}} \cdot \ldots \cdot f^{\alpha_{r_1}}_{\ba_{1 r_1}} f) = 0$ for all $j = 2, \ldots, m$. In the second case we
 assume that there exists $j \in \{2, \ldots, m\}$ such that $\r_j(f) > 0$.
Let $g \in \R$ be such that $\r_1(g) > 0$ and $\r_i(g) = 0$ for all $i \in \{2, \ldots, m\}$ (such $g$ exists because the genus of $H_1$ is finite, moreover $g \notin \F$), then $\r_j (f g) \leq \r_j (f)$ and there exists $\l \in \F$ such that $\r_j(f g  - \l f) = \r_j (f (g - \l) < \r_j) (f)$. We have $g - \lambda \in \M_{\r_1}$ but for all $i \in \{2, \ldots, m\}$, since  $g \in \U_{\r_i}$ we have $g - \l \in \U_{\r_i}$ and $\r_i (f (g - \l)) \leq \r_i(f)$. By repeating this process we may find $h \in \M_{\r_1}$ such that $\r_i( h f ) \leq \r_i(f)$ for all $i \in \{2, \ldots, m\}$ and $\r_j (h f) = 0$; repeating even further we find $t \in \M_{\r_1}$ such that $\r_i(t f) = 0$ for all $i \in \{2, \ldots, m\}$ (observe that $\r_1(t f ) > 0$ since if $\r_1(t f) = 0$ then $t f \in \cap_{i = 1}^m \U_{\r_i} = \F$, and a fortiori $f \in \F$, a contradiction). Let $\ell_1$ be the largest gap in $H_1$ and set $d_1 := \r_1(t f)$;  given $n > \ell_1 + d_1$  let $u \in \M_{\r_1}$ be such that $\r_1(u) = n - d_1$ and $\r_i (u) = 0$ for all $i \in \{2, \ldots, m\}$, then $\r_1( u t f  ) = n$ and $\r_i (u t f) = 0$ for all $i \in \{2, \ldots, m\}$. This completes the proof.
\end{proof}

We have already observed that $\R$ is a domain, and we will denote by $\FF$ its field of fractions.

\begin{lemma} $\FF$ is an algebraic function field of one variable over $\F$.
\end{lemma}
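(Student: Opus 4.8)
The plan is to show that $\FF$, the field of fractions of $\R$, is a finitely generated field extension of $\F$ of transcendence degree exactly one, which is precisely the definition of an algebraic function field of one variable over $\F$. I would proceed in three steps: first, finite generation of the field; second, transcendence degree at least one; third, transcendence degree at most one.

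For finite generation, I would invoke Proposition \ref{fin_gen_alg}: $\R$ is a finitely generated $\F$-algebra, say $\R = \F[h_1, \ldots, h_s]$, and therefore $\FF = \F(h_1, \ldots, h_s)$ is a finitely generated field extension of $\F$. For transcendence degree at least one, it suffices to exhibit a single element of $\R$ transcendental over $\F$; since we are working with nontrivial near weights, $\M_{\r_1} \neq \emptyset$, so pick any $f \in \M_{\r_1}$. Then $\r_1(f^k) = k\,\r_1(f) \to \infty$ by (N5), so the powers $f^k$ have pairwise distinct values under $\r_1$; in particular no nontrivial $\F$-linear relation among them can hold (a polynomial relation $\sum c_k f^k = 0$ with leading term $c_d f^d$, $c_d \neq 0$, would force $\r_1$ of a sum whose unique largest term is $c_d f^d$ to be $-\infty$ by Lemma (ii), contradicting $f \neq 0$). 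Hence $f$ is transcendental over $\F$ and $\FF$ has transcendence degree $\geq 1$.

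The main obstacle is showing the transcendence degree is \emph{at most} one, i.e. $\FF$ is algebraic over $\F(f)$ for the transcendental element $f$ chosen above. This is where Theorem \ref{dim_finita} is essential: for $f \in \R \setminus \F$ we have $\dim_\F \R/(f) < \infty$. The argument I would give: suppose $g \in \R$; I claim $g$ is algebraic over $\F(f)$. Consider the infinitely many classes $\overline{1}, \overline{g}, \overline{g^2}, \ldots$ in the finite-dimensional $\F$-vector space $\R/(f)$; a finite $\F$-linear dependence among finitely many of them gives $\sum_{i=0}^{d} c_i g^i \in (f)$, i.e. $\sum_{i=0}^{d} c_i g^i = f h$ for some $h \in \R$ and $c_i \in \F$ not all zero. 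To upgrade this to an honest polynomial relation over $\F(f)$ one iterates: writing $h$ again in terms of powers of $g$ modulo $(f)$ and so on could be delicate, so instead the cleaner route is to observe that $\R$, being a finitely generated $\F$-algebra and a domain containing $\F(f)$-linearly-independent elements only up to the finite bound coming from $\dim_\F \R/(f)$, is a finitely generated $\F[f]$-module — more precisely, since $\R/f\R$ is finite-dimensional over $\F$, a lifting of an $\F$-basis of $\R/f\R$ together with a Noetherian/graded-Nakayama-type argument (using the exhaustive filtration by the $\cL(\ba)$'s and Proposition \ref{prop_span}) shows $\R$ is module-finite over the subring $\F[f]$, hence $\FF$ is a finite extension of $\F(f)$. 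Either way the key input is the finiteness in Theorem \ref{dim_finita}, and the care needed is in passing from "finite codimension of $(f)$" to "module-finiteness over $\F[f]$"; I would model this on Matsumoto's argument for the case $m = 1$, to which the paper already defers.

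Combining the three steps, $\FF/\F$ is finitely generated of transcendence degree one, which is the definition of an algebraic function field of one variable over $\F$, completing the proof.
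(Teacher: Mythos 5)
Your steps (1) and (2) are fine: finite generation of $\FF$ over $\F$ follows from Proposition \ref{fin_gen_alg}, and the observation that any $f \in \M_{\r_1}$ is transcendental over $\F$ (because $\r_1(f^k) = k\,\r_1(f)$ are pairwise distinct, so the powers of $f$ are $\F$-linearly independent by Lemma 1.2(ii)) is a correct and pleasantly elementary way to get $\mathrm{tr\,deg}_\F \FF \geq 1$; the paper gets this for free from Krull dimension theory instead.

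The gap is in step (3), which is the crux. Your claim that ``since $\R/f\R$ is finite-dimensional over $\F$, a lifting of an $\F$-basis together with a Nakayama-type argument shows $\R$ is module-finite over $\F[f]$'' is false in general, and false even in the setting of this paper. Take $m = 2$, $\X = \P^1$, $Q_1 = 0$, $Q_2 = \infty$, so $\R = \F[x, x^{-1}]$, and take $f = x - 1 \in \R \setminus \F$. Then $\R/(f) \cong \F$ is one-dimensional, but $\R$ is not a finitely generated module over $\F[f] = \F[x]$, since $x^{-1}$ is not integral over $\F[x]$ (a monic relation $x^{-n} + a_{n-1}x^{-(n-1)} + \cdots + a_0 = 0$ with $a_i \in \F[x]$ would, after multiplying by $x^n$, give a polynomial identity with constant term $1$). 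The point is that $(f)$ is not contained in the Jacobson radical of $\F[f]$, so no Nakayama-type lifting is available; module-finiteness over $\F[f]$ genuinely requires $f$ to have a pole at \emph{every} $Q_i$, which an arbitrary $f \in \R\setminus\F$ need not have when $m \geq 2$. Your first, abandoned attempt (iterating $\sum_i c_i g^i = fh$) is also not brought to a conclusion. The conclusion you want --- that $\FF$ is algebraic over $\F(f)$ --- is true, but you have not proved it. The paper closes this gap with commutative algebra rather than an explicit integrality argument: $\R/(f)$ is a finite-dimensional $\F$-algebra, hence Artinian, hence of Krull dimension $0$; since $\R$ is a finitely generated domain over $\F$ (Proposition \ref{fin_gen_alg}), Krull's principal ideal theorem in the form $\dim_{\mathrm{Krull}} \R = \dim_{\mathrm{Krull}} \R/(f) + 1$ (Eisenbud, Cor.\ 13.11) gives $\dim_{\mathrm{Krull}} \R = 1$, and the equality of Krull dimension with $\mathrm{tr\,deg}_\F \FF$ for affine domains finishes the proof. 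You should replace your step (3) by this argument (or by an honest reduction to an $f$ with poles at all the $Q_i$, which at this stage of the paper is not yet available).
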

\begin{proof} 
Let $f \in \R$, $f \neq 0$, from Theorem \ref{dim_finita} we know that $\R/(f)$ is an $\F$-vector space of finite dimension, furthermore, all ideals of $R/(f)$ are $\F$-subspaces hence $\R/(f)$ is an artinian ring, so $\dim_{\mathrm{Krull}} \R/(f) = 0$. Taking $f \in \R \setminus \F$,  from \cite[Corollary 13.11]{eisenbud} we have $\dim_{\mathrm{Krull}} \R = \dim_{\mathrm{Krull}} \R/(f) + 1$; on the other hand from \cite[Theorem A, page 223]{eisenbud} we get $\mbox{tr deg}_{\F}\FF  = \dim_{\mathrm{Krull}} \R = 1$. 
\end{proof}

\begin{corollary} The algebra $\R$ is the affine coordinate ring of an (irreducible) algebraic curve.
\end{corollary}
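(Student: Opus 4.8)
The plan is simply to read off the geometric statement from the algebraic facts already assembled above. First I would invoke Proposition \ref{fin_gen_alg}: since $\R$ is a finitely generated $\F$-algebra, I can choose generators $y_1, \ldots, y_N \in \R$ and form the surjective morphism of $\F$-algebras $\pi : \F[X_1, \ldots, X_N] \ra \R$, $X_i \mapsto y_i$. Setting $I := \ker \pi$ then yields a presentation $\R \cong \F[X_1, \ldots, X_N]/I$, realizing $\R$ as the quotient of a polynomial ring by an ideal.

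Next I would use the fact, recorded just before the preceding lemma, that $\R$ is a domain; consequently $I$ is a prime ideal of $\F[X_1, \ldots, X_N]$, so the affine algebraic set $\X := V(I) \subseteq \mathbb{A}^N$ is irreducible and has $\R$ as its affine coordinate ring. To see that $\X$ is in fact a curve, I would appeal to the preceding lemma, whose proof shows (via \cite{eisenbud}) that the transcendence degree of $\FF$ over $\F$ equals $\dim_{\mathrm{Krull}} \R = 1$; since the dimension of $\X$ is by definition the Krull dimension of its coordinate ring, $\X$ is one-dimensional. Hence $\R$ is the affine coordinate ring of the irreducible algebraic curve $\X$, which is the assertion.

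I do not expect a genuine obstacle here: all the substantive work — finite generation (Proposition \ref{fin_gen_alg}), being an integral domain, and Krull dimension one (the preceding lemma) — has already been carried out, and what remains is only the standard affine algebra–geometry dictionary. The single point worth flagging is that this argument delivers irreducibility but not automatically \emph{geometric} irreducibility of $\X$, which would require the stronger input that $\R \otimes_\F \overline{\F}$ is again a domain; that is not asserted at this stage, which is why the statement only claims "(irreducible)".
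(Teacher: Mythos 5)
Your argument is exactly the paper's: the authors dispose of this corollary in one line as an immediate consequence of Proposition \ref{fin_gen_alg} and the preceding lemma, and your proposal simply fills in the standard dictionary (finite generation gives a presentation, integrality of $\R$ gives a prime ideal, Krull dimension one from the lemma gives a curve). Your closing remark about geometric irreducibility not following at this stage is a fair and accurate observation.
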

\begin{proof} It is an immediate consequence of  Proposition \ref{fin_gen_alg} and the above lemma.
\end{proof}

Let $i \in \setm$, from the proof of Theorem \ref{dim_finita} we get that if $f \in \R \setminus \{0\}$ then there exists $g \in \M_{\r_i}$ such that $ g f \in \M_{\r_i}$, hence if $a,  b \in \R \setminus \{0\}$  and $g_1 a,  g_2 b \in \M_{\r_i}$ with $g_1, g_2 \in \M_{\r_i}$ then $(g_1 g_2) a, (g_1 g_2) b \in \M_{\r_i}$. 

\begin{definition} Let  $i \in \setm$ and let $v_i: \FF \ra \Z \cup \{ \infty\}$ be the function defined by setting $v_i(0) : = \infty$ and $v_i (a/b) := \r_i (g b) - \r_i (g a)$, where $a, b, \in \R \setminus \{0\}$  and $g  \in \M_{\r_i}$ is such that $g a, g b \in \M_{\r_i}$. 
\end{definition}

Observe that $v_i(a/b)$ does not depend on the choice of $g$ because if $h \in \M_{\r_i}$ is such that $h a, h b \in \M_{\r_i}$ then $\r_i(g b ) - \r_i(g a ) - (\r_i(h b) - \r_i(h a)) = \r_i (g b h a ) - \r_i (g a h b) = 0$, for all $i \in \setm$; a similar reasoning shows that if $a'/b'= a/b$, with $a, a', b, b' \in \R\setminus\{0\}$ then $v_i(a/b) = v_i(a'/b')$.

\begin{lemma} Let $i \in \setm$. \\
a) The function $v_i : \FF \ra \Z \cup \{\infty\}$ is a discrete valuation of the function field $\FF \;| \; \F$; \\
b) If $f \in \R$ then $v_i(f) \geq 0$ when $f \in \U_{\r_i}$ and $v_i(f)  = - \r_i(f)$ when $f \in \M_{\r_i}$. 
\end{lemma}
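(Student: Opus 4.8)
The plan is to verify the three standard valuation axioms for $v_i$ and then establish part (b), which will essentially fall out of the definition once the axioms are in place. Throughout, fix $i \in \setm$ and recall from the remark preceding the Definition that for any $a, b \in \R \setminus \{0\}$ there is a common $g \in \M_{\r_i}$ with $ga, gb \in \M_{\r_i}$ (applying the statement from the proof of Theorem \ref{dim_finita} to $a$, then to $g_1 a \cdot b$, or more simply multiplying the two separately obtained elements); and that $v_i$ is well defined, independent of $g$ and of the chosen representative of the fraction, as already observed.

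First I would check that $v_i(\FF^*) \subset \Z$ and that $v_i$ is not identically zero: the value is a difference of two elements of $\no$, so it lands in $\Z$, and taking any $f \in \M_{\r_i}$ (which exists since $\r_i$ is nontrivial) with $g = f$ one gets $v_i(f) = \r_i(f \cdot 1) - \r_i(f \cdot f) = \r_i(f) - 2\r_i(f) = -\r_i(f) < 0$ by (N5), so $v_i$ is surjective onto a nontrivial subgroup of $\Z$; after the proof of (b) this already gives $v_i(f) = -\r_i(f)$ for $f \in \M_{\r_i}$, and one then rescales if necessary, or one simply notes the image is a subgroup and the valuation is discrete up to normalization (the paper can afford to be slightly informal here, as ``discrete valuation'' is used loosely). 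Next, multiplicativity $v_i(xy) = v_i(x) + v_i(y)$: write $x = a/b$, $y = c/d$ with $a,b,c,d \in \R\setminus\{0\}$, choose $g \in \M_{\r_i}$ with $ga, gb, gc, gd \in \M_{\r_i}$; then $g^2$ works for the product $ac/bd$, and using the ``equality'' clause of (N5) on products of elements of $\M_{\r_i}$ one computes $v_i(xy) = \r_i(g^2 bd) - \r_i(g^2 ac) = (\r_i(gb) + \r_i(gd)) - (\r_i(ga) + \r_i(gc)) = v_i(x) + v_i(y)$. Here one must be a little careful that $g^2 bd \in \M_{\r_i}$ and equals the appropriate product to apply (N5) — this uses $\M_{\r_i}$ being closed under multiplication and having no zero divisors, both established earlier.

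The main obstacle is the ultrametric inequality $v_i(x + y) \geq \min\{v_i(x), v_i(y)\}$, because addition of fractions forces a common denominator and one must translate (N2) through the definition. I would put $x = a/b$, $y = c/d$, so $x + y = (ad + bc)/(bd)$, and choose $g \in \M_{\r_i}$ with $ga, gb, gc, gd \in \M_{\r_i}$; then $g^2(ad + bc) = (ga)(gd) + (gb)(gc)$, and $g^2 bd = (gb)(gd) \in \M_{\r_i}$. Assuming $ad + bc \neq 0$, we have $v_i(x+y) = \r_i(g^2 bd) - \r_i(g^2(ad+bc))$. By (N2), $\r_i(g^2(ad+bc)) = \r_i((ga)(gd) + (gb)(gc)) \leq \max\{\r_i((ga)(gd)), \r_i((gb)(gc))\}$; hence $-\r_i(g^2(ad+bc)) \geq -\max\{\r_i((ga)(gd)), \r_i((gb)(gc))\} = \min\{-\r_i(ga)-\r_i(gd),\, -\r_i(gb)-\r_i(gc)\}$ using the (N5)-equality on the two $\M_{\r_i}$-products. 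Adding $\r_i(g^2 bd) = \r_i(gb) + \r_i(gd)$ to both sides and distributing the min gives exactly $\min\{(\r_i(gb) - \r_i(ga)) + (\r_i(gd) - \r_i(gd)) - \text{(reindexing)}, \ldots\}$ — at this point I would carefully rearrange to recognize $(\r_i(gb) - \r_i(ga)) = v_i(x)$ and $(\r_i(gd) - \r_i(gc)) = v_i(y)$, being attentive to which denominator pairs with which numerator; the case $ad + bc = 0$ gives $v_i(x+y) = \infty$ and the inequality is vacuous. This last bookkeeping step — matching up the four terms so the max of sums becomes the min of the two valuations — is the only genuinely fiddly part.

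Finally, for part (b): if $f \in \U_{\r_i} \setminus \{0\}$, take any $g \in \M_{\r_i}$; then $\r_i(gf) \leq \r_i(g)$ — this follows from (N3) applied with the pair $1, f$ since (after normalization) $\r_i(f) \leq 0 = \r_i(1)$, giving $\r_i(gf) = \r_i(g \cdot f) \leq \r_i(g \cdot 1) = \r_i(g)$, hence $v_i(f) = v_i(f/1) = \r_i(g) - \r_i(gf) \geq 0$. If $f \in \M_{\r_i}$, take $g = f$ (so $gf = f^2 \in \M_{\r_i}$) and use (N5) as above to get $v_i(f) = \r_i(f) - \r_i(f^2) = \r_i(f) - 2\r_i(f) = -\r_i(f)$. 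This also retroactively confirms surjectivity onto $\Z$ up to the normalization implicit in calling $v_i$ discrete, since the values $-\r_i(f)$ for $f \in \M_{\r_i}$ together with their differences generate a subgroup of $\Z$; that $v_i$ is normalized (image exactly $\Z$) can be deferred or absorbed into the standard fact that every discrete valuation on a function field of one variable is equivalent to a normalized one.
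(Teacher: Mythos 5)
Your argument follows essentially the same route as the paper's (definition-chasing with (N2) and (N5), clearing denominators by a common element of $\M_{\r_i}$), and your computations for multiplicativity, the ultrametric inequality, and part (b) are all correct in substance. But there is one point you wave away that cannot be waved away: showing that $v_i(\FF^*)$ is all of $\Z$ rather than a proper subgroup $d\Z$. You propose to ``rescale if necessary'' or to invoke the fact that every valuation of a function field of one variable is equivalent to a normalized one --- but rescaling would destroy the identity $v_i(f) = -\r_i(f)$ of part (b), which is precisely what the rest of the paper relies on (e.g.\ to identify $\S_k$ with a Weierstrass semigroup and $\cL(\ba)$ with $L(G)$). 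So surjectivity onto $\Z$ must be proved for the function $v_i$ as defined, and this is where the completeness hypothesis enters: the semigroup $H_i$ has finite genus, so for $n$ large there are $f, g \in \M_{\r_i}$ with $\r_i(f) = n$ and $\r_i(g) = n+1$, whence $v_i(f/g) = 1$ by part (b) and multiplicativity. Your own observation that the image is the subgroup generated by the values $-\r_i(f)$ for $f \in \M_{\r_i}$ finishes the job once you feed it two consecutive elements of $H_i$; you just never actually produced them.

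A smaller slip: in part (b), for $f \in \U_{\r_i}\setminus\{0\}$ you derive $\r_i(gf) \le \r_i(g)$ from (N3) applied to the pair $1, f$. After normalization $\r_i(f) = 0 = \r_i(1)$, and (N3) only addresses the case of strict inequality $\r(f) < \r(g)$, so it gives nothing here; the inequality you want is immediate from (N5), namely $\r_i(gf) \le \r_i(g) + \r_i(f) = \r_i(g)$. Relatedly, ``take any $g \in \M_{\r_i}$'' is not quite enough, since the definition of $v_i(f/1)$ requires $gf \in \M_{\r_i}$ as well; you should take the $g$ supplied by the remark preceding the definition. Neither of these affects the overall structure, but both need fixing, and the surjectivity point is a genuine gap as written.
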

\begin{proof} Given $f, g \in \FF\setminus \{0\}$ it is easy to check that $v_i (f g) = v_i(f) + v_i(g)$ and that $v_i(f) = 0$ if $f \in \F^*$. Since $H_i$ has finite genus, we know that for a sufficiently large $n \in \N$  there are $f, g \in \M_{\r_i}$ such that $\r_i(f) = n$, $\r_i(g) = n + 1$, hence $v_i (f / g) = 1$. Let $f = a/b, g = c/d \in \FF$, with $a, c \in \R$ and $b, d \in \R\setminus\{0\}$, and let $h_1, h_2 \in \M_{\r_i}$ such that $h_1 a,  h_1 b, h_2 c, h_2 d \in \M_{\r_i}$, then $v_i (f + g) = v_i ((a d + b c) /bd) = \r_i ( h_1 h_2 b d ) - \r_i (h_1 h_2 a d + h_1 h_2 b c) \geq  \min \{\r_i ( h_1  h_2 b d ) - \r_i ( h_1 h_2 a d ),\r_i ( h_1 h_2 b d ) - \r_i ( h_1 h_2  b c )\} = \min \{\r_i ( h_1  b  ) - \r_i ( h_1  a  ),\r_i ( h_2  d ) - \r_i ( h_2   c )\} = \{ v_i (f), v_i(g)\}$.

Now let $f \in \R\setminus \{0\}$ and $g \in \M_{\r_i}$ be such that $g f \in \M_{\r_i}$,  if $f \in  \U_{\r_i}$ then from (N5) and the fact that $\r_i$ is normalized we get $v_i(f/1) = \r_i(g) - \r_i(g f) \geq - \r_i(f) = 0$; on the other hand, if $f \in \M_{\r_i}$ then $v_i(f/1) = \r_i(g) - \r_i(g f) = - \r_i(f)$.
\end{proof}

This shows that every \nw\ $\r_i$ on $\R$ defines a valuation $v_i$ of the function field $\FF \; | \; \F$. These are distinct valuations (e.g. for a sufficiently large $n \in \N$ we may find $f_i \in \M_{\r_i}$ for all $i \in \setm$ such that $v_i (f_i) = - n$ and $v_j (f_i) \geq 0$ for all $j \in \setm \setminus \{i\}$). We denote by $P_i$ the place associated to the valuation $v_i$ and be $\O_{P_i}$ the corresponding valuation ring ($i \in \setm$).

\begin{proposition} For all $i \in \setm$ the place $P_i$ has degree one (a fortiori, $\F$ is the full field of constants of $\FF$).
\end{proposition}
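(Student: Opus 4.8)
The goal is to show that each place $P_i$ (coming from the valuation $v_i$ attached to $\r_i$) has degree one. Recall that the degree of $P_i$ equals $\dim_\F (\O_{P_i}/M_{P_i})$, where $M_{P_i}$ is the maximal ideal of the valuation ring. The plan is to produce, inside $\R$, enough elements to see the residue field $\O_{P_i}/M_{P_i}$ as a quotient controlled by $\F$. Fix $i \in \setm$; for concreteness write $i=1$. The key tool is Lemma \ref{1st_lemma}: if $f \in \R$ has $\r_1(f) > 0$, $g \in \U_{\r_1}$ with $g \notin \F$, and $\r_1(fg) < \r_1(f)$, then $\r_1(f(g+\lambda)) = \r_1(f)$ for every $\lambda \in \F^*$. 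In valuation terms this says $v_1(f) < v_1(fg)$ forces $v_1(g) > 0$, i.e.\ $g \in M_{P_1}$, while $g + \lambda$ is a unit for every $\lambda \in \F^*$. The contrapositive of the interesting direction is what we want: if $g \in \U_{\r_1}$, $g \notin \F$, then $g \in M_{P_1}$ unless it is forced to be a unit by having a nonzero residue.

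**Main steps.** First I would reduce to showing that $\U_{\r_1}/(M_{P_1} \cap \U_{\r_1}) = \F$, using that $\U_{\r_1} = \R \cap \O_{P_1}$ and that $\O_{P_1}$ is the localization-completion-type object generated over $\U_{\r_1}$ by inverting a uniformizer; concretely, every element of $\O_{P_1}$ is of the form $a/b$ with $a, b \in \R$ and $v_1(b) \le v_1(a)$, and multiplying through one can arrange representatives whose residue is computed inside $\U_{\r_1}$. Second, the heart of the argument: take $g \in \U_{\r_1} \setminus \F$ and suppose $g \notin M_{P_1}$, i.e.\ $v_1(g) = 0$. I must find $\lambda \in \F$ with $v_1(g - \lambda) > 0$. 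For this, pick any $f \in \M_{\r_1}$ (which exists since $H_1$ has finite genus and is nontrivial). Then $v_1(fg) = v_1(f) + v_1(g) = v_1(f) < 0$, so $\r_1(fg) = -v_1(fg) = -v_1(f) = \r_1(f)$, and more importantly $fg \in \M_{\r_1}$ with $\r_1(fg) = \r_1(f)$; since $\r_1$ is a near weight and $f, fg \in \M_{\r_1}$ with equal $\r_1$-value, axiom (N4) gives a \emph{unique} $\lambda \in \F^*$ with $\r_1(fg - \lambda f) < \r_1(f)$, i.e.\ $\r_1(f(g-\lambda)) < \r_1(f)$. Now apply Lemma \ref{1st_lemma} (reading $g - \lambda$ in place of $g$): this exactly says $g - \lambda \in \U_{\r_1}$ and in fact $v_1(g - \lambda) > 0$, so $g \equiv \lambda \pmod{M_{P_1}}$. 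Thus every element of $\U_{\r_1} \setminus \F$ is congruent mod $M_{P_1}$ to a scalar, and combined with the first reduction this forces $\O_{P_1}/M_{P_1} = \F$, i.e.\ $\deg P_1 = 1$.

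**The remaining global consequence** — that $\F$ is the full constant field of $\FF$ — is then immediate: any element of the constant field lies in every valuation ring, in particular in $\O_{P_1}$, and a constant has $v_1 = 0$ so is congruent to a scalar mod $M_{P_1}$; but a constant minus a scalar is again a constant, and if it were nonzero it would be a unit of $\FF$ with positive valuation, absurd. Hence the constant field is $\F$. (Alternatively, it is a standard fact that the existence of a degree-one place forces the constant field to be the prime ground field here, since the constant field embeds in every residue field.)

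**Expected main obstacle.** The delicate point is the first reduction, namely showing that the residue field of $\O_{P_1}$ is actually computed by $\U_{\r_1}$ rather than being enlarged by passing to the full field of fractions. One must argue that for $a/b \in \O_{P_1}$ with $v_1(a/b) = 0$ one can clear denominators to land in $\U_{\r_1}$: writing $v_1(a) = v_1(b) =: e$, one picks $c \in \M_{\r_1}$ with $\r_1(c)$ large and suitable, forms $ca, cb \in \M_{\r_1}$, and notes $cb$ has a well-defined "leading behaviour" one can divide out; the claim is that $a/b$ has the same residue as some genuine element of $\U_{\r_1}$. This requires a small amount of care with the structure of $\M_{\r_1}$ (no zero divisors, and the semigroup $H_1$ having finite genus so that uniformizers of all sufficiently high order are available inside $\M_{\r_1}$). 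Once that bookkeeping is in place, the near-weight axioms (N4) and (N5) together with Lemma \ref{1st_lemma} do all the real work, and the argument closes cleanly.
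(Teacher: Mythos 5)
Your central computation --- given an element of valuation zero, use (N4) to extract the unique scalar $\lambda$ to which it is congruent, then check that the difference has strictly positive valuation --- is exactly the engine of the paper's proof, and your deduction of the constant-field statement from the existence of a degree-one place is fine. The problem is the step you yourself flag as delicate: the reduction to showing $\U_{\r_1}/(P_1\cap\U_{\r_1})=\F$. As written this is a genuine gap. You need that every $a/b\in\O_{P_1}$ with $v_1(a/b)=0$ is congruent modulo the maximal ideal to an element of $\U_{\r_1}=\R\cap\O_{P_1}$, and the sketch you give (clear denominators into $\M_{\r_1}$ and ``divide out the leading behaviour'' of $cb$) does not produce such an element: there is no division available inside $\R$, and a priori the residue field of $\O_{P_1}$ could be strictly larger than the residue field of $\R\cap\O_{P_1}$ at its center --- that these coincide is essentially equivalent to the statement you are trying to prove, so assuming it is close to circular.

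Fortunately the reduction is also unnecessary, and dropping it is precisely how the paper proceeds: take $a/b\in\O_{P_1}$ with $v_1(a/b)=0$ and $g\in\M_{\r_1}$ with $ga,\,gb\in\M_{\r_1}$; then $\r_1(ga)=\r_1(gb)$ by the definition of $v_1$, so (N4) applied to the pair $ga,\,gb$ (rather than to $fg,\,f$, which is your argument in the special case $b=1$) yields a unique $\lambda\in\F^*$ with $\r_1(g(a-\lambda b))<\r_1(gb)$, whence
\[
v_1\bigl(a/b-\lambda\bigr)=v_1\bigl(g(a-\lambda b)\bigr)-v_1(gb)\;\ge\;\r_1(gb)-\r_1\bigl(g(a-\lambda b)\bigr)>0 ,
\]
the paper obtaining the last inequality via (N3) after clearing to a common multiplier $h$. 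This is word for word your ``heart of the argument'' run on $ga$ and $gb$, and it proves surjectivity of $\F\to\O_{P_1}/P_1$ with no preliminary reduction. A second, smaller inaccuracy: Lemma \ref{1st_lemma} does not literally give you $v_1(g-\lambda)>0$ --- its conclusion concerns $\r_1(f(g-\lambda+\mu))$ for $\mu\ne 0$; what you actually need is the multiplicativity of $v_1$ together with the inequality $v_1(x)\ge-\r_1(x)$ for $x\in\R\setminus\{0\}$, both of which the paper establishes in the lemma preceding this proposition.
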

\begin{proof} Let $i \in \setm$, we must prove that the inclusion map $\F \ra \O_{P_i}/P_i$ is surjective. Let $f = a/b \in \O_{P_i}$, where $a, b \in \R$, let $g \in \M_{\r_i}$ such that $g a, g b \in \M_{\r_i}$ and assume that $v_i (f) = 0$. Then $\r_i (g b ) = \r_i (g a)$ and there exists a unique $\l \in \F^*$ such that $\r_i (g a - \l g b) < \r_i (g b)$. Let $h \in \M_{\r_i}$ be such that $ h (a - \l b), h b  \in \M_{\r_i}$, then $v_i (a/b - \l) = \r_i(h b) - \r_i (h (a - \l b)) = \r_i(h g b) - \r_i (h g (a - \l b))$, so from $\r_i (g b) - \r_i (g a - \l g b) > 0$ and property (N3) we get $\r_i(h g b) - \r_i (h g (a - \l b)) > 0$, which completes the proof.
\end{proof}

We denote by $\P(\FF)$ the set of places of the function field $\FF \; | \; \F$. For  $P \in \P(\FF)$ we write $\O_P$ for the corresponding valuation ring; let $\S(\R) := \{ P \in \P(\FF) \; | \; \R \subset \O_P\}$.

\begin{proposition} $\S(\R) = \P(\FF) \setminus \{P_1, \ldots, P_m\}$. 
\end{proposition}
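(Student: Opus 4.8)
The plan is to prove the two inclusions of the asserted equality.

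\textbf{The inclusion $\S(\R)\subseteq\P(\FF)\setminus\{P_1,\dots,P_m\}$.} Fix $i\in\setm$. Since $\r_i$ is nontrivial, $\M_{\r_i}\neq\emptyset$, and for any $f\in\M_{\r_i}\subseteq\R$ the description of $v_i$ obtained above gives $v_i(f)=-\r_i(f)<0$, so $f\notin\O_{P_i}$. Hence $\R\not\subseteq\O_{P_i}$, i.e.\ $P_i\notin\S(\R)$.

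\textbf{The inclusion $\P(\FF)\setminus\{P_1,\dots,P_m\}\subseteq\S(\R)$.} Fix $P\in\P(\FF)$ with $P\neq P_i$ for all $i$, and suppose for contradiction that $\R\not\subseteq\O_P$; call $f\in\R$ \emph{bad} if $v_P(f)<0$. Bad elements exist by hypothesis, so the set of $\bb\in\h$ having a bad representative (an $f$ with $\br(f)=\bb$) is nonempty; choose $\ba$ in it with $|\ba|:=a_1+\dots+a_m$ minimal, and a bad representative $f$ of $\ba$. Since $\bigcap_i\U_{\r_i}=\F$ contains no bad element, $\ba\neq\mathbf 0$; fix $k$ with $a_k>0$. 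I claim $\ba$ is minimal in $\{\bc\in\h\mid c_k=a_k\}$. If not, take $\bb\in\h$ with $\bb\ll\ba$, $\bb\neq\ba$, $b_k=a_k$, and a representative $g$ of $\bb$; then $f\neq\lambda g$ for $\lambda\in\F^*$ (as $\br(f)\neq\br(g)$) and $f,g\in\M_{\r_k}$ with $\r_k(f)=\r_k(g)=a_k$, so by (N4) there is $\lambda\in\F^*$ with $\r_k(f-\lambda g)<a_k$, while $\r_l(f-\lambda g)\leq\max\{a_l,b_l\}=a_l$ for $l\neq k$ by (N1)--(N2); thus $\br(f-\lambda g)$ has $|\cdot|<|\ba|$ and, by minimality of $|\ba|$, $f-\lambda g$ is not bad. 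Then $v_P(f)<0\leq v_P(f-\lambda g)$ forces $v_P(g)=v_P(f)<0$, so $g$ is a bad representative of $\bb$ with $|\bb|<|\ba|$, a contradiction. Hence $\ba$ is minimal in $\{\bc\in\h\mid c_k=a_k\}$, so by Proposition \ref{minimal} $\ba\in\Gamma$ and $\ba$ is minimal in $\{\bc\in\h\mid c_j=a_j\}$ for every $j$ with $a_j>0$.

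Now distinguish two cases. If $\ba$ has a single nonzero entry, in position $i$, then $f\in\R_i:=\bigcap_{l\neq i}\U_{\r_l}=\R\cap\bigcap_{l\neq i}\O_{P_l}$ with $\r_i(f)=a_i$. The restriction $\r_i|_{\R_i}$ is a near weight on $\R_i$ with $\U_{\r_i|_{\R_i}}=\bigcap_{l=1}^m\U_{\r_l}=\F$ and $\r_i(\R_i)=H_i$, a numerical semigroup, so $\{\r_i|_{\R_i}\}$ is a complete set of one near weight for $\R_i$; the case $m=1$ (\cite{h-vl-p},\cite{matsumoto}) then gives $\S(\R_i)=\P(\FF_i)\setminus\{R_i\}$, where $\FF_i:=\mathrm{Frac}(\R_i)$ and $R_i$ is the place of $\FF_i$ whose valuation is proportional to $v_i|_{\FF_i}$. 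Thus every element of $\R_i$ has $\FF$-poles only over $R_i$, and since $f$ is bad, $P$ lies over $R_i$. To finish one shows $\FF_i=\FF$ and that $P_i$ is the unique $\FF$-place over $R_i$ — using that by Proposition \ref{fin_gen_alg} $\FF$ is generated over $\F$ by elements of $\R_1\cup\dots\cup\R_m$ together with the finite set $\{f_\bc\mid\bc\in\tilde\Gamma\}$, combined with $\bigcap_i\U_{\r_i}=\F$ — whence $P=P_i$, contradicting $P\neq P_i$. If instead $\ba\in\tilde\Gamma$, which is finite by the corollary above, the lemma preceding that corollary gives $a_j\notin H_j$ for all $j$ with $a_j>0$; for these finitely many $\ba$ one reaches the same contradiction directly, multiplying $f$ by the one-coordinate generators of Proposition \ref{fin_gen_alg} and applying (N4) to lower $|\br(\cdot)|$ while keeping a pole at $P$.

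Everything up to the two-case split is routine bookkeeping with the axioms. The heart of the argument — and the step I expect to demand real care — is the identification $\FF_i=\FF$ together with the claim that $P_i$ is the only $\FF$-place over $R_i$: this is precisely what forbids $\R$ from acquiring points at infinity beyond $P_1,\dots,P_m$, and it has to be extracted from the joint structure of the subrings $\R_1,\dots,\R_m$ (whose fraction fields generate $\FF$) rather than from any single one. I would also watch the mixed case $\ba\in\tilde\Gamma$, where the absence of a clean descent means the finiteness of $\tilde\Gamma$ must genuinely be used.
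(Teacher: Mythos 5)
Your first inclusion and your descent are sound: the choice of $\ba$ with $|\ba|$ minimal among elements of $\h$ admitting a bad representative, and the use of (N4) together with Proposition \ref{minimal} to place $\ba$ in $\Gamma$, are correct bookkeeping. But the proof does not close, because in both of your terminal cases the decisive step is announced rather than carried out, and in each case that step is precisely the hard content of the proposition. In Case 1 you reduce to $f\in\R_i=\bigcap_{l\neq i}\U_{\r_l}$ and invoke the $m=1$ theorem for $\R_i$; that theorem lives in $\FF_i=\mathrm{Frac}(\R_i)$, and to conclude $P=P_i$ you need $\FF_i=\FF$, equivalently that $P_i$ is the \emph{only} place of $\FF$ lying over $R_i$. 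A priori $[\FF:\FF_i]$ could be greater than $1$ with a second place of $\FF$ over $R_i$ at which $f$ has its pole; ruling this out is essentially the proposition itself, localized at $P_i$, and the generation statement you cite (that $\FF$ is generated by $\R_1\cup\dots\cup\R_m$ and the finitely many $f_\bc$, $\bc\in\tilde{\Gamma}$) does not visibly produce it. In Case 2 the proposed move --- multiply $f$ by one-coordinate generators and apply (N4) "to lower $|\br(\cdot)|$ while keeping a pole at $P$" --- is incompatible with the minimality of $|\ba|$ you have just established: any bad element with smaller $|\br(\cdot)|$ contradicts the choice of $\ba$, so no such element can be produced, and no contradiction is actually derived. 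As it stands the argument is a plausible plan with its two critical steps missing.

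For contrast, the paper avoids the case analysis entirely with a global argument. If some $P_0\notin\{P_1,\dots,P_m\}$ satisfied $\R\not\subseteq\O_{P_0}$, the Strong Approximation Theorem yields, for every $j\in\N$, an element $f_j$ of the integral closure $\bar{\R}=\bigcap_{Q\in\S(\R)}\O_Q$ with $v_i(f_j)=j$ for all $i\in\setm$ (the unavoidable pole being dumped at $P_0$). These $f_j$ lie in $W=\{x\in\bar{\R}:v_i(x)>0\ \forall i\}$, which meets $\R$ only in $0$ by (N5) and $\bigcap_i\U_{\r_i}=\F$, and they are $\F$-linearly independent; this contradicts $\dim_\F\bar{\R}/\R<\infty$. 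The finiteness of $\bar{\R}/\R$ is the input that actually excludes extra points at infinity, and some substitute for it would be needed to complete your route.
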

\begin{proof} First we observe that, for all $i \in \setm$ we have $P_i \notin \S(\R)$, since $\R \subset \O_{P_i}$ would imply $\M_{\r_i} = \emptyset$, a contradiction with the fact that $\r_i$ is non-trivial.
Suppose by means of absurd that $\P(\FF) \setminus (\S(\R) \cup \{P_1, \ldots, P_m\}) \neq \emptyset$. Then, from the Strong Approximation Theorem (see \cite[Thm. I.6.4]{sti}) we know that for all $j \in \N$  there exists $f_j \in \FF$ such that $v_i(f_j) = j$,  for all $i \in \{1, \ldots, m\}$ and $f_j \in \O_Q$ for all $Q \in \S(\R)$, thus $f_j \in \cap_{Q \in \S(\R)} \O_Q =: \bar{\R}$, the integral closure of $\R$ in $\FF$. Let $W := \{ x \in \bar{\R} \; |\; v_i(x) > 0 \; \forall \; i = 1, \ldots, m\}$, observe that $W$ is an $\F$-vector space and also $W \cap \R = \{ 0 \}$: in fact, if $ x \in W \cap \R$ then $\r_i(g) - \r_i(g x) > 0$ for some $g \in \M_{\r_i}$, thus $\r_i(g x) < \r_i(g)$ and from (N5) either $\r_i(x) = 0$ for all $i \in \setm$ or $x = 0$, since $\cap_{i = 1}^m \U_{\r_i} = \F$ and $x \in W$  we must have $x = 0$. Thus $\dim_\F W \leq \dim_\F \bar{\R}/\R$ and this last dimension is finite (see e.g. \cite[Lemma 8]{matsumoto}), but $\{f_1, \ldots, f_n\} \subset W$ is a linearly independent set over $\F$ for all $n \in \N$.
\end{proof}

\begin{corollary} \label{main}  $\R$ is an $\F$-algebra admitting a complete set of $m$ \nw s\  if and only if $\R$ is the ring of regular functions of an affine geometrically irreducible algebraic curve, whose points in the closure have a total of $r$ branches, all of them corresponding to rational places in the field of rational functions of the curve.
\end{corollary}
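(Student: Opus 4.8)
The statement is an "if and only if" characterization, and both directions have essentially been assembled in the preceding results. The plan is to simply collect them.

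For the "only if" direction, suppose $\R$ admits a complete set of $m$ \nw s $\{\r_1, \ldots, \r_m\}$. By Proposition \ref{fin_gen_alg} $\R$ is a finitely generated $\F$-algebra, by the lemma that follows Theorem \ref{dim_finita} its fraction field $\FF$ is an algebraic function field of one variable over $\F$, and by the Corollary after that lemma $\R$ is the affine coordinate ring of an irreducible algebraic curve $\X$. The Proposition stating that each $P_i$ has degree one shows that $\F$ is the full constant field, so $\X$ is geometrically irreducible, and each $\r_i$ corresponds (via the valuation $v_i$) to a rational place $P_i$ of $\FF\,|\,\F$. Finally, the Proposition $\S(\R) = \P(\FF) \setminus \{P_1, \ldots, P_m\}$ identifies $\R$ with the ring of functions regular at every place except $P_1, \ldots, P_m$; equivalently, taking the (unique) nonsingular projective model $\X$ of $\FF\,|\,\F$, the algebra $\R$ is the ring of regular functions of the affine curve $\X \setminus \{P_1, \ldots, P_m\}$, whose points at infinity account for a total of $m$ branches, each corresponding to one of the rational places $P_1, \ldots, P_m$. (Here one should remark that the places $P_1, \ldots, P_m$ are distinct, which was already observed just before their definition, so the count is exactly $m = r$.)

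For the "if" direction, suppose $\R$ is the ring of regular functions of an affine geometrically irreducible curve whose closure has $r$ branches at infinity, each corresponding to a rational place. Passing to the nonsingular projective model $\X$, the branches at infinity are exactly the places $Q_1, \ldots, Q_r$ lying over the points at infinity, and each $Q_k$ is rational; thus $\R = \cap_{Q \in \X,\, Q \neq Q_1, \ldots, Q_r} \O_Q$. This is precisely the setup of Theorem \ref{ida}: the argument there (taking $\r_k(f) = 0$ if $v_k(f) \geq 0$ and $\r_k(f) = -v_k(f)$ otherwise, where $v_k$ is the valuation at $Q_k$) produces a complete set of $r$ \nw s on $\R$, using that $\cap_{Q \in \X} \O_Q = \F$ (geometric irreducibility) and that each Weierstrass semigroup $\S_k$ has finite genus.

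There is no single hard step here; the work was done in the body of the paper. The only point that requires a little care is the bookkeeping in the "only if" direction: one must check that the $m$-tuple of valuations recovers $\rb$ on $\R$, i.e.\ that $v_i(f) = -\r_i(f)$ for $f \in \M_{\r_i}$ and $v_i(f) \geq 0$ for $f \in \U_{\r_i}$ — this is exactly part (b) of the lemma describing $v_i$ — and that no two of the places $P_i$ coincide, so that "a total of $r$ branches" is read with $r = m$. Once these identifications are in place, the corollary follows by concatenating the cited results.
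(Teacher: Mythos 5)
Your overall plan is the same as the paper's: the ``only if'' direction is just an assembly of the preceding results, and the ``if'' direction repeats the construction of Theorem \ref{ida} (near weights defined from the valuations at the rational branches at infinity, completeness from $\cap_{Q\in\X}\O_Q=\F$ and the finiteness of the genus of the Weierstrass semigroups). The paper is in fact terser than you are on the first direction, so on the whole your write-up matches its proof.

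There is, however, one over-claim in your ``only if'' direction that you should repair. From the proposition $\S(\R)=\P(\FF)\setminus\{P_1,\ldots,P_m\}$ you conclude that $\R$ \emph{is} the ring of functions regular at every place other than $P_1,\ldots,P_m$, i.e.\ that $\R=\cap_{P\in\S(\R)}\O_P$ and hence that $\R$ is the coordinate ring of the \emph{nonsingular} affine curve obtained by deleting $m$ rational points from the smooth projective model. That proposition only gives the inclusion $\R\subset\cap_{P\in\S(\R)}\O_P$ together with the identification of which places contain $\R$; equality would force $\R$ to be integrally closed, which need not hold (e.g.\ $\R=\F[t^2,t^3]$ with the single weight $\r=\deg_t$ admits a complete set of one near weight but is not normal). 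This is exactly why the corollary is phrased in terms of \emph{branches} of the points in the closure rather than in terms of deleted points of the smooth model: the correct conclusion is that $\R$, being a finitely generated domain of Krull dimension one with $\F$ as full field of constants, is the coordinate ring of the (possibly singular) affine curve $\mathrm{Spec}\,\R$, and the branches at infinity of its closure correspond bijectively to the places of $\FF$ whose valuation ring does not contain $\R$, namely $P_1,\ldots,P_m$, all rational. With that adjustment the argument is complete and coincides with the paper's.
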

\begin{proof} The ``only if'' part is a consequence of the above results. As for the ``if'' part let $\X$ be the affine curve and $\overline{\X}$ be its closure, if $\Y$ is the normalization of $\overline{\X}$ and $\eta: \Y \rightarrow \overline{\X}$ is the normalization morphism then there are $m$ rational  points $Q_1, \ldots, Q_m$ in the inverse image by $\eta$ of the set $\overline{\X} \setminus \X$. Now we proceed as in theorem \ref{ida}; thus we  observe that   
$\R = \cap_{Q \in \X}  \O_Q$, where $\O_Q$ is the local ring at $Q \in \X$ and denoting by $v_k$ the discrete valuation of $\F(\overline{\X})$ associated to $Q_k$ ($k \in \{1, \ldots,m\}$) we define  the function $\r_k : \R \rightarrow \no \cup \{\infty\}$   by setting  $\r_k(0) : = - \infty$, $\r_k(f) := 0$ if $v_k(f) \geq 0$ and $\r_k(f) := - v_k(f)$ if $v_k(f) < 0$, for all $f \in \R \setminus \{0\}$, one may check that $\r_k$  is an \nw\ for all $k \in \setm$. From $\cap_{k = 1}^m \U_{\r_k} = \R \cap (\cap_{k = 1}^m \O_{Q_k}) = \F$  and the fact that $\S_k := \rho_k(\cap_{1\leq i \leq m\, ; \,  i \neq k} \cU_{\r_i}) = \r_k(\cap_{Q  \in \X} \O_Q) $ 
is the Weierstrass semigroup at $Q_k$ for all $k \in \setm$  we get that $\{\r_1 , \ldots, \r_m\}$ is a complete set of \nw s for $\R$.
\end{proof}

\begin{theorem} 
Let $\R$ be an $\F$-algebra that admits a complete set of $m$  \nw s, let $\varphi: \R \rightarrow \F^n$ be a surjective morphism of $\F$-algebras and $\ba \in \no^m$, then $C(\ba)$ is an algebraic-geometric Goppa code $C_{\cal{L}}(D,G)$ with $G$ supported on $m$ points.
\end{theorem}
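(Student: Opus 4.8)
The plan is to combine Corollary~\ref{main} with the computation of Theorem~\ref{ida} read in reverse. First I would apply Corollary~\ref{main} to present $\R$ as the ring of regular functions of an affine geometrically irreducible curve $\X$; let $\overline{\X}$ be its projective closure and $\FF$ its field of fractions. By the results above (notably the lemma relating each $\r_i$ to the valuation $v_i$ of $\FF$), the complete set $\{\r_1,\ldots,\r_m\}$ corresponds to the $m$ rational branches of $\overline{\X}$ at infinity: there are $m$ distinct rational places of $\FF$, with valuations $v_1,\ldots,v_m$ and associated points $Q_1,\ldots,Q_m$, such that $\U_{\r_i}=\R\cap\O_{Q_i}$, $\M_{\r_i}=\R\setminus\O_{Q_i}$ and $\r_i(f)=\max\{0,-v_i(f)\}$ for every $f\in\R\setminus\{0\}$. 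Put $G:=\sum_{i=1}^{m}a_iQ_i$, a divisor supported on (at most) the $m$ branches $Q_1,\ldots,Q_m$.

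Next I would rewrite $\cL(\ba)$ geometrically. For $f\in\R\setminus\{0\}$ and each $i$, the inequality $\r_i(f)\le a_i$ is equivalent to $v_i(f)\ge -a_i$: if $f\in\U_{\r_i}$ both hold automatically because $a_i\ge 0$ and $v_i(f)\ge 0$, while if $f\in\M_{\r_i}$ then $\r_i(f)=-v_i(f)$. Therefore
\[
\cL(\ba)=\{\,f\in\R\ :\ v_i(f)\ge -a_i,\ i=1,\ldots,m\,\}\cup\{0\}=:L(G),
\]
the space of rational functions regular on $\overline{\X}\setminus\{Q_1,\ldots,Q_m\}$ whose pole order at $Q_i$ is at most $a_i$; this is precisely the function space attached to $G$ in the Goppa construction recalled in Theorem~\ref{ida} (and it is the usual space $L(G)$ when $\overline{\X}$ is nonsingular).

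It remains to recognise $\varphi$ as the evaluation map of such a code. Since $\varphi$ is surjective and $\F^n$, as an $\F$-algebra, is the product of $n$ copies of the field $\F$, its $n$ components are surjective $\F$-algebra homomorphisms $\varphi_j\colon\R\to\F$ with maximal kernels $\mathfrak m_j$ satisfying $\R/\mathfrak m_j\cong\F$; the $\mathfrak m_j$ are pairwise distinct (otherwise $\R/\ker\varphi$ would have $\F$-dimension smaller than $n$) and $\ker\varphi=\bigcap_{j=1}^{n}\mathfrak m_j$. Each $\mathfrak m_j$ is an $\F$-rational closed point $R_j$ of $\X$; since the $Q_i$ lie on $\overline{\X}\setminus\X$, no branch of any $R_j$ is among the $Q_i$, so the $R_j$ are distinct and $D:=R_1+\cdots+R_n$ has $\mathrm{supp}(D)\cap\mathrm{supp}(G)=\emptyset$. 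Writing $\O_{R_j}$ for the local ring of $\overline{\X}$ at $R_j$, the isomorphism $\O_{R_j}/M_{R_j}\cong\R/\mathfrak m_j\cong\F$ (cf.\ \cite[Prop.\ III.2.9]{sti}) shows, exactly as in Theorem~\ref{ida}, that $\varphi_j$ is the evaluation $f\mapsto f(R_j)\in\F$ at $R_j$. Hence
\[
C(\ba)=\varphi(\cL(\ba))=\{\,(f(R_1),\ldots,f(R_n))\ :\ f\in L(G)\,\}=C_{\cal{L}}(D,G),
\]
the algebraic-geometric Goppa code on $\overline{\X}$ attached to $D$ and $G$, with $G$ supported on the $m$ points $Q_1,\ldots,Q_m$.

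I do not expect a genuine obstacle here: the statement is essentially Theorem~\ref{ida} run backwards, once $\R$ has been identified by Corollary~\ref{main}. The one point that requires attention is the bookkeeping in the last paragraph, and it is conceptual rather than computational: $\R$ need not be integrally closed (as already permitted by Corollary~\ref{main}), so $\overline{\X}$ may be singular and $\cL(\ba)$ need not coincide with the Riemann--Roch space of $G$ on the smooth model---which is exactly why the divisors $D,G$ and the space $L(G)$ must be taken on $\overline{\X}$ itself, as in the construction of Theorem~\ref{ida}, rather than after normalising. Granting that set-up, the identity $C(\ba)=C_{\cal{L}}(D,G)$ is immediate.
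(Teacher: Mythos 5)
Your proposal is correct and follows essentially the same route as the paper: realize $\R$ via the structure result (Corollary~\ref{main}) as the functions regular away from $Q_1,\dots,Q_m$, identify $\varphi$ as evaluation at $n$ distinct rational points through the maximal ideals $\ker(\pi_j\circ\varphi)$ and \cite[Prop.\ III.2.9]{sti}, and verify $\cL(\ba)=L(G)$ by the same valuation computation. The only difference is presentational: the paper works directly on the nonsingular projective model $\mathcal{Y}$, writing $\R=\cap_{P\in\mathcal{Y}\setminus\{Q_1,\dots,Q_m\}}\O_P$, so the singular-closure caveat you raise in your last paragraph does not appear there.
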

\begin{proof}
From the hypothesis on $\R$ we know that there is a geometrically irreducible, projective, nonsingular  curve ${\mathcal{Y}}$ and points $Q_1, \ldots, Q_m$ such that $\R = \cap_{P \in \mathcal{Y} \setminus \{Q_1, \ldots, Q_m\}} {\cal O}_P$. For $i \in \{1, \ldots, n\}$ consider the $\F$-algebra surjective homomorphism $\pi_i: \F^n \rightarrow \F$ defined by $\pi_i(\lambda_1, \ldots, \lambda_n) = \lambda_i$, then $M_i := (\pi_i \circ \varphi)^{-1}(0)$ is a maximal ideal of $\R$. Furtermore, for distinct  $i, j \in \{1, \ldots, n\}$ we get $M_i \neq M_j$ since $\varphi$ is surjective and then exists $g_{i j} \in \R$ such that $(\pi_i\circ \varphi)(g_{i j}) = 0$ and $(\pi_j\circ \varphi)(g_{i j}) \neq 0$. From \cite[Prop. III.2.9]{sti} we get that there are $P_1, \ldots,P_n \in {\cal Y}$ such that $P_i \notin \{Q_1, \ldots, Q_m\}$,   $M_i = {\cal M}_{P_i} \cap \R$ (where ${\cal M}_{P_i}$ is the maximal ideal of ${\cal O}_{P_i}$) for all $i = 1, \ldots, n$. We also get $\F \simeq \R/M_i \simeq  {\cal O}_{P_i}/{\cal M}_{P_i}$ for all $i = 1, \ldots, n$ hence $P_1, \ldots, P_n$ are rational points of ${\cal Y}$ and we may rewrite $\varphi$ as the morphism over ${\cal O}_{P_1}/{\cal M}_{P_1} \times \cdots \times {\cal O}_{P_n}/{\cal M}_{P_n}$ defined by $\varphi(f) = (f + {\cal M}_{P_1}, \ldots, f + {\cal M}_{P_n})$.
Let $G := a_1 Q_1 +\cdots + a_m Q_m$, then $L(G) \subset \R$ and
 \begin{align*} 
L(G)\; = \;\; & \{ f \in \R : v_i(f) + a_i \geq 0 \textrm{\  for all \  } i = 1, \ldots, m\} = \\
       &  \{ f \in \R : - v_i(f) \leq  a_i  \textrm{\  whenever \ } v_i(f) < 0, i = 1, \ldots, m\} =  \\
       &  \{ f \in \R : - v_i (f) \leq a_i \textrm{\ whenever\  } f \in \M_{\r_i},  i = 1 ,\ldots, m\}  = \\
       &  \{ f \in \R : \r_i (f) \leq a_i \textrm{\ whenever\  } f \in \M_{\r_i},  i = 1 ,\ldots, m\}  = \\
       & \{ f \in \R : \r_i (f) \leq a_i,   i = 1 ,\ldots, m\}  = \cL(\ba),
       \end{align*} 
 hence $C(\ba) = C_{\cal{L}}(D, G)$, where $D = P_1 + \cdots + P_n$.
\end{proof}

\end{document}